  \def\x{\endgroup\ExecuteOptions{dvipdfm}}%
        \def\x{\endgroup\ExecuteOptions{pdftex}}%
\newtheorem{fact}{Fact}
\newcommand{\keywords}[1]{\par\addvspace\baselineskip
\noindent\keywordname\enspace\ignorespaces#1}
\newcommand{\anonymous}[2]{#1} 
\newcommand{\compressaffil}[2]{#1} 
\begin{document}

\title{\papertitle\thanks{This work was partially supported by
		          Polish National Science Center grant 2015/17/B/ST6/01897.}}
\titlerunning{\papertitle}

\author{%
	\anonymous{
		\mbox{Karol Gotfryd} \and \mbox{Marek Klonowski} \and \mbox{Dominik Paj\k{a}k}
	\compressaffil{}{}
	}
	{Of anonymous authors}
}
\authorrunning{
	\anonymous{K. Gotfryd, M.~Klonowski, D.Paj\k{a}k
	}{Anonymous authors}
}

\institute{
	\anonymous{ 
		\compressaffil{
			Department of Computer Science
		}{}\\
		Wroc{\l}aw University of Science and Technology\\
		\compressaffil{
			Wybrze{\.z}e Wyspia{\'n}skiego 27, 50-370 Wroc{\l}aw, Poland
		}{}\\
	\compressaffil{\email{%
	\{firstname.secondname\}@pwr.edu.pl}}{}
}{...}}

\maketitle

\begin{abstract}
	
We consider the following problem - a group of mobile agents perform some task on a terrain modeled as a graph. 
In a given moment of time an adversary gets an access to the graph and positions of the agents. 
Shortly before adversary's observation the mobile agents have a chance to relocate themselves in order to hide their
initial configuration. We assume that the initial configuration may possibly reveal to the adversary some information
about the task they performed. Clearly agents have to change their location in possibly short time using minimal energy. 
In our paper we introduce a definition of a \emph{well hiding} algorithm in which the starting and final configurations of
the agents have small mutual information. Then we discuss the influence of various features of the model on the running
time of the optimal well-hiding algorithm. We show that if the topology of the graph is known to the agents, then the number
of steps proportional to the diameter of the graph is sufficient and necessary. In the unknown topology scenario we only consider
a single agent case. We first show that the task is impossible in the deterministic case if the agent has no memory.
Then we present a polynomial randomized algorithm. Finally in the model with memory we show that the number of steps
proportional to the number of edges of the graph is sufficient and necessary. In some sense we investigate how complex is
the problem of ``losing'' information about location (both physical and logical) for  different settings.

	\keywords{location hiding, mobile agents, random walk, graphs}
\end{abstract}

\section{Introduction}\label{sect:intro}

Let us consider the following problem. We have a group of mobile devices/sensors called \emph{agents} performing a task
on a given area. The task could be for example collecting/detecting some valuable resource, mounting detectors or installing mines.
In all aforementioned examples the owner of the system may want to hide the location of the agents against an adversary
observing the terrain from the satellite or a drone. That is, location of the devices may leak sensitive information to
the adversary. If we assume that the adversary's surveillance of the terrain is permanent and precise then clearly no
information can be concealed. Hence in our scenario there are periods of time when the adversary 
cannot observe the system during which the actual tasks are performed. Upon the approaching adversary, the devices launch
an algorithm to \textit{hide} their \textit{location}, i.e. they change their positions to mislead the observer. Clearly
in many real life  scenarios the additional movement dedicated for hiding their previous position should be possibly short
for the sake of saving energy and time. It  is also clear that the devices may want to return to their original
positions in order to resume their activities (When the adversary stops surveillance). On the other hand it is intuitively clear that a very short move may be
not sufficient  for ``losing'' the information about the starting  positions.

The outlined description is an intuitive  motivation for the research presented in our paper. \textbf{Exactly} the same problem can be however 
considered in many other settings  when we demand  ,,quick`` reconfiguration of a system such that the observable configuration should say 
possibly small about the initial one (i.e., agents exploring a network). For that reason we decided to use quite  general 
mathematical model.  Namely  the agents are placed
in vertices of a graph can move only through edges (single edge in a single round). Our aim is to design an algorithm
that governs the movement of the agents to change their initial location to the final location in such a way that the
adversary given the final assignment of agents cannot learn their initial positions. At hand one can point  the following
strategy - every agent chooses independently  at random vertex on the graph and moves to this location. 
Clearly (but informally) the new  location does not reveal  any information about the initial position. That is, the
initial locations of agents are perfectly hidden from the adversary. Note however that the same effect can be obtained
if all agents goes to a single, fixed in advance vertex. In this case, again the final position is stochastically
independent on the initial position. These strategies requires however that agents know the topology of the graph.

Intuitively, similar  effect can be achieved if each agent starts a random walk
and stop in a vertex  after some number  of steps. It this approach, the knowledge of the graph is \textbf{not} necessary, however
one can easily see that state after any number of steps reveals  some knowledge about the initial position (at least in some graphs).
Moreover in this strategy we need randomization. To summarize, there are many different methods for hiding the initial
location.  It turns that the possible solutions and their efficiency  depends greatly on the details of the assumed
model -- if the graph is known to the agents, what memory is available to each agent, if the agents can communicate and
if the agents have access to a source of random bits. Our paper is devoted to formalizing this problem and discussing
its variants in chosen setting. 

\subsubsection*{Organization of the Paper}
 In Section \ref{sect:model} we describe formally the problem and
the formal model. Section \ref{sect:results}  summarizes the obtained results discussed in the paper.
The most important related work is mentioned  in Section \ref{sect:rel}. In  Section \ref{sect:known} we present results
for the model wherein stations know the topology of the graph representing the terrain. We show both optimal algorithms
as well as respective lower bounds. The case with unknown topology is discussed in Section \ref{sect:unknown}.
We conclude in Section~\ref{sect:conclusion}. Some basic facts and definitions from Information Theory 
are recalled in Appendix~\ref{sect:apend}. In Appendix~\ref{sect:apendMC} we remind some definitions
and properties of Markov chains and random walks.

\section{Model}
\label{sect:model} 

\subsubsection*{The Agents in the Network}
We model the network as a simple, undirected, connected graph (i.e. a graph with no loops and multiple edges) with $n$
vertices, $m$ edges and diameter $D$. The nodes of the graph are uniquely labeled with numbers $\{1,2,\dots,n\}$.
We have also $k\geq 1$ agents representing mobile devices. Time is divided into synchronous rounds. At the beginning of
each round each agent is located in a single vertex. In each round the agent can change its position to any of
neighboring vertices.  We allow many agents to be in a single vertex in the same round. The agents need to locally
distinguish the edges in order to navigate in the graph hence we assume that the edges outgoing from a node with degree
$d$ are uniquely labeled with numbers $\{1,2,\dots,d\}$. We assume no correlation between the labels on two endpoints of
any node. A graph with such a labeling is sometimes called \emph{port-labeled}.

When an agent is located in a vertex we assume that it has access to the degree of the node and possibly the value or
the estimate of $n$ and to some internal memory sufficient for local computations it performs. 
In our paper we consider various models of mobile agents depending on the resources at their disposal. This will involve 
settings where the devices have or have not an access to a source of random bits and they are given a priori the topology
of the network or they have no such knowledge. In the latter case we will consider two different scenarios depending on
whether the agent has an access to operational memory that remains intact when it traverses an edge or its memory is very
limited and does not allow to store any information about the network gathered while it moves from one vertex to another. 

Our primary motivation is the problem of physical hiding of mobile devices performing their tasks in some terrain.
Nevertheless, our work aims for formalizing the problem of losing information on positions initially occupied by the agents
in some given network. Thus, we focus on proposing a theoretical model related to the logical topology that can be described
by a connected graph representing the underlying network consisting of the nodes corresponding to positions where mobile
devices can be placed and the edges the agents can traverse for moving to another location.

\subsubsection*{Model of the Adversary}

From the adversary's point of view, the agents are indistinguishable whilst the nodes of the underlying graph are labeled.
The assumption on indistinguishability of the devices is adequate for the system with 
very similar objects.

Thus the state of the system in a given round $t$ can be seen as a graph $G$ and a function $n_t(v)$ denoting the number
of agents located at node $v$.
Let $X_t$ for $t \in \{ 0,1, \ldots \}$ represent the state of the network at the beginning of the $t$-th round. 

We assume that in round $0$ the agents complete (or interrupt due to approaching adversary) their actual tasks and run
hiding algorithm $\AAA$ that takes $T$ rounds. Just after the round $T$ the adversary is given  the final state $X_T$ and, roughly speaking, its aim is to learn as much as possible about the initial state $X_0$. That is, the adversary gets an access to a single configuration (representing a single view of a system).

Note that the adversary may have some \textit{a priori} knowledge that is modeled as a distribution
of $X_0$ (possibly unknown to agents). In randomized settings the adversary has no information about agents' local random number generators. On the
other hand, the aim of agents is to make learning  the adversary $X_0$ from $X_T$ impossible  for \textbf{any}
initial state (or distributions of states)\footnote{In that sense we consider the worst case scenario that implies
strongest security guarantees}. Moreover the number of rounds $T$ should be as small as possible (we need to hide the
location quickly). We also consider \textit{energy complexity} understood as the maximal number of moves (i.e. moving to
a neighboring vertex) over all agents in the execution of $\AAA$. Such definition follows from the fact that we need to
have all agents working and consequently we need to protect the most loaded agent against running out of batteries.
As we shall see, in all cases considered in this paper the energy complexity is very closely related to the time of
getting to the ``safe'' configuration by all devices, namely it is asymptotically equal $T$.

\subsubsection*{Security Measures}

Let $X_0$ be a random variable representing the knowledge of the adversary about the initial state and let $X_T$
represent the final configuration of the devices after executing algorithm $\AAA$. We would like to define a measure of
efficiency of algorithm $\AAA$ in terms of location hiding. In the case of problems based on ``losing'' knowledge,
hiding information, there is no single, commonly accepted definition. For example the problem of finding adequate
(for practical reasons) security measure in the case of anonymous communication has been discussed for a long time.
These discussion are reflected in dozens of papers including \cite{ANON1} and \cite{ANON2}. Nevertheless the good
security measure needs to estimate  ``how much information'' about $X_0$ is in $X_T$. 

Let $X \sim \mathcal{L}$ be a random variable following probability distribution $\mathcal{L}$. We denote by $\E{X}$ 
the expected value of $X$. By $\Unif{A}$ we denote the uniform probability distribution over the set $A$ and by $\Geo{p}$
the geometric distribution with parameter $p$. An event $E$ occurs with high probability (w.h.p.) if for an arbitrary
constant $\alpha > 0$ we have $\Pr[E] \geq 1 - \BigO{n^{-\alpha}}$.
Let $\HH{X}$ denotes the entropy of the random variable $X$. Similarly  $\HH{X|Y}$ denotes conditional entropy and
$\MutInf{X}{Y}$ denotes mutual information. All that notations and definitions are recalled in the
Appendix \ref{sect:apend}. Let us also notice, that through the paper we will restrict only to simple, connected and
undirected graphs and unless otherwise stated, $G$ will always denote such graph.

Our definition is based on the following notion of \textit{normalized mutual information}, also known as
\textit{uncertainity coefficient} (see e.g. \cite{NumRecipes}, chapter 14.7.4)
$$ 
	\UC{X}{Y} = \frac{\MutInf{X}{Y}}{\HH{X}}~.
$$
From the definition of mutual information it follows that $\UC{X}{Y} = 1 - \frac{\HH{X|Y}}{\HH{X}}$
and $0 \leq \UC{X}{Y} \leq 1$. The uncertainity coefficient $\UC{X}{Y}$ takes the value 0 if there is no association
between $X$ and $Y$ and the value 1 if $Y$ contains all information about $X$. Intuitively, it tells us what portion
of information about $X$ is revealed by the knowledge of $Y$. $\HH{X} = 0$ implies $\HH{X|Y} = 0$ and we may use the
convention that $\UC{X}{Y} = 0$ in that case. Indeed, in such case we have stochastic independence between $X$ and $Y$
and the interpretation in terms of information hiding can be based on the simple observation that $\HH{X} = 0$ means
that there is nothing to reveal about $X$ (as we have full knowledge of $X$) and $Y$ does not give any extra information.

\begin{definition}
	\label{def:hiding}
The algorithm $\AAA$ is $\epsilon$-hiding if for any distribution of the initial configuration $X_0$ with non-zero entropy
(i.e. $\HH{X_0} > 0$) and for \textbf{any} graph $G$ representing the underlying network
\begin{equation}
	\label{eq:epsilonHiding}
	\UC{X_0}{X_T} = \frac{\MutInf{X_0}{X_T}}{\HH{X_0}} \leq \epsilon~,
\end{equation}
where $X_T$ is the state just after the execution of the algorithm $\AAA$.
\end{definition}

\begin{definition}
\label{def:wellHiding}
We say that the algorithm $\AAA$ is 
\begin{itemize}
	\item well hiding if it is $\epsilon$-hiding for some $\epsilon \in \SmallO{1}$;
	\item perfectly hiding if it is $0$-hiding.
\end{itemize}
\end{definition}

Intuitively, this definition says that the algorithm works well if the knowledge of the final state
reveals only a very small fraction of the total information about the initial configuration 
regardless of the distribution of initial placement of the devices.
Let us mention that theses definitions state that any hiding algorithm should work well regardless
of the network topology. If an algorithm $\AAA$ is $\epsilon$-hiding, then for \textbf{any}
simple connected graph $G$ and for \textbf{any} probability distribution of agents' initial positions $X_0$
in that graph with non-zero entropy the final configuration $X_T$ after $\AAA$ terminates should fulfill
Equation \ref{eq:epsilonHiding}. Notice also that there are some cases when it is not feasible to hide the
initial location in a given graph. Assuming the adversary knows the agents' initial distribution $X_0$,
$\HH{X_0} = 0$ means that the agents with probability 1 are initially placed in some fixed location which
is known to the adversary, hence there is no possibility to hide them. In particular, this is the case when
the graph representing the network has only one vertex (it can be a model of system with exactly one state).
All mobile devices must be then initially located in that vertex and no hiding algorithm exists for this setting. 

The general idea of location hiding algorithms is depicted in Fig. \ref{fig:locHidingAlg}.
The agents are initially placed in some vertex of the graph $G$ (Fig. \ref{fig:locHidingAlg:0})
according to some known distribution of their initial configuration $X_0$. 
In each step every agent located in some vertex $v_i$ can move along an edge incident $v_i$ or stay in $v_i$.
After $T$ steps the algorithm terminates resulting in some final configuration $X_T$ of agents' positions
(Fig. \ref{fig:locHidingAlg:3}). Our goal is to ensure that any adversary observing the positions $X_T$ of mobile
devices after execution of an location hiding algorithm can infer as small information as possible about
their actual initial placement $X_0$, regardless of $G$ and the distribution of $X_0$.
\begin{figure}[!ht]
	\centering
	\begin{subfigure}{.5\linewidth}
		\centering
		\includegraphics[width=0.95\linewidth]{./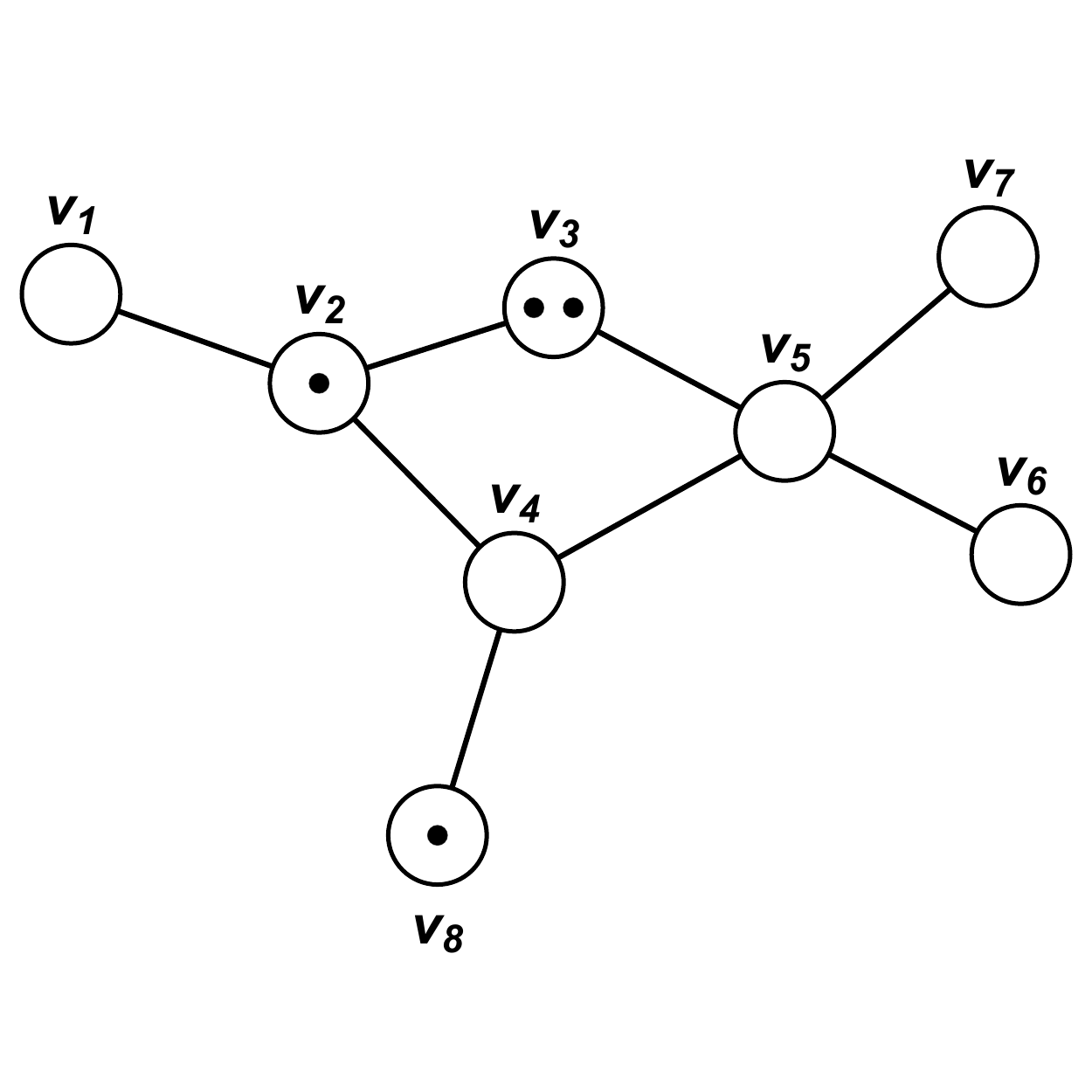}
		\caption{$t = 0$}
		\label{fig:locHidingAlg:0}
	\end{subfigure}%
	\begin{subfigure}{.5\linewidth}
		\centering
		\includegraphics[width=0.95\linewidth]{./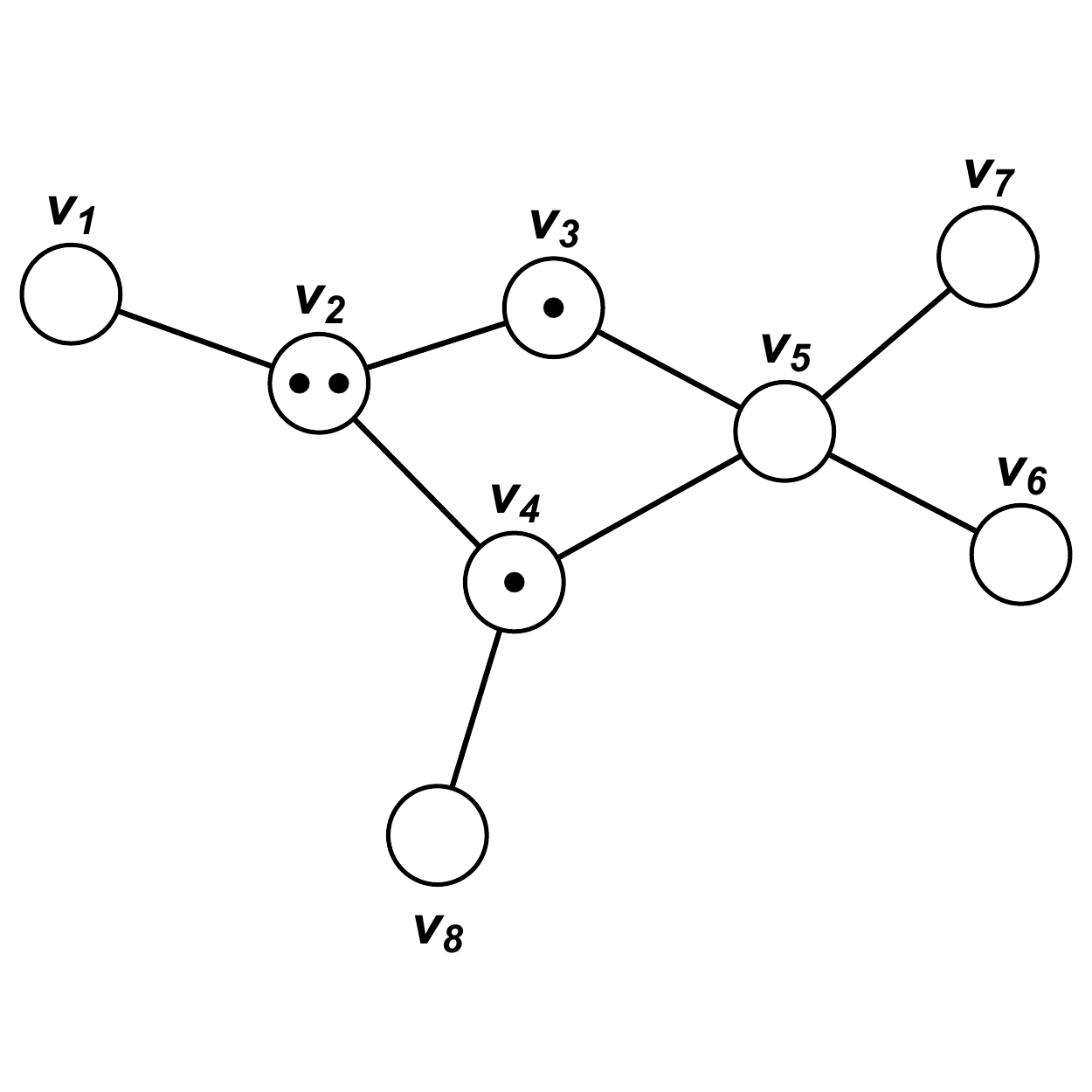}
		\caption{$t = 1$}
		\label{fig:locHidingAlg:1}
	\end{subfigure}%

	\begin{subfigure}{.5\linewidth}
		\centering
		\includegraphics[width=0.95\linewidth]{./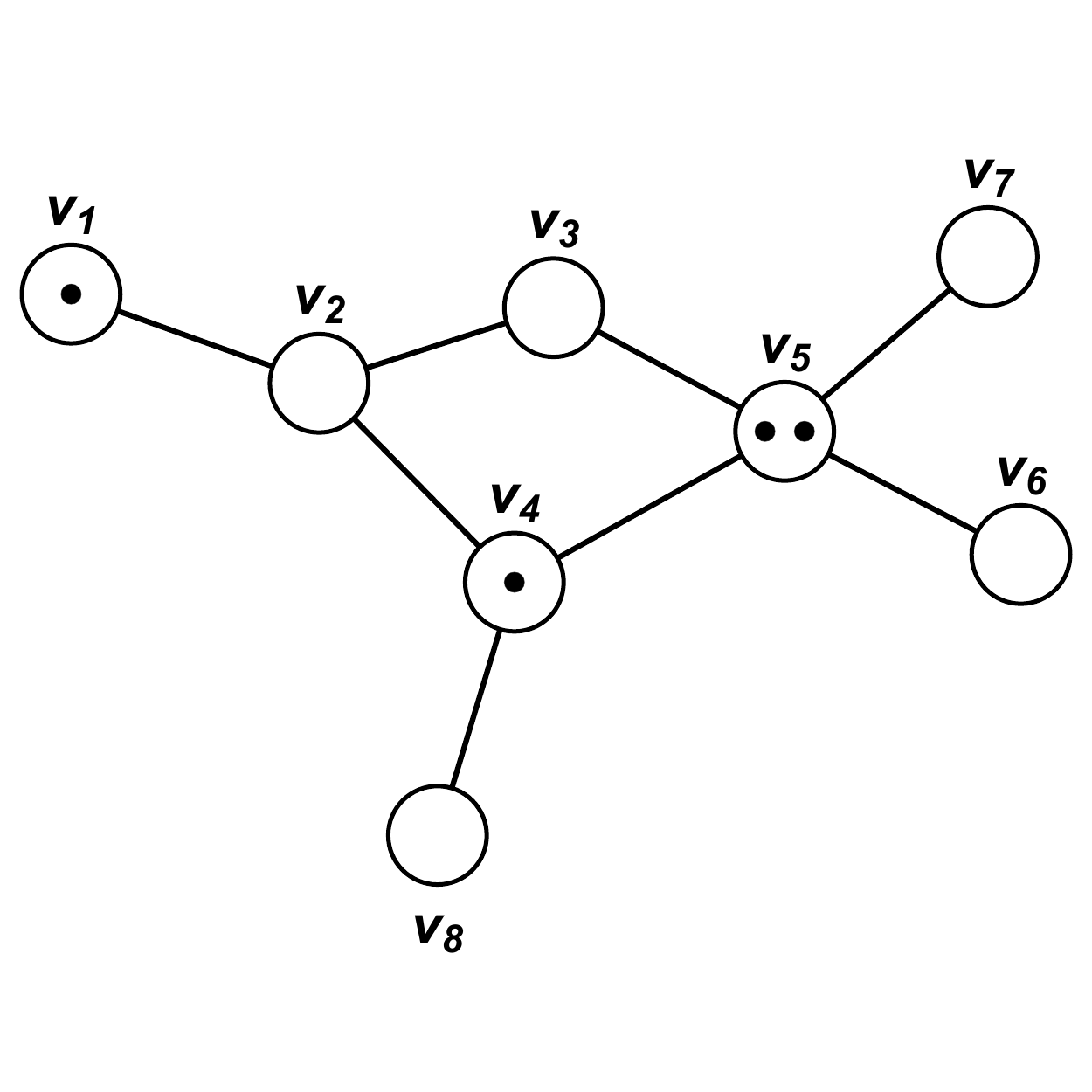}
		\caption{$t = 2$}
		\label{fig:locHidingAlg:2}
	\end{subfigure}%
	\begin{subfigure}{.5\linewidth}
		\centering
		\includegraphics[width=0.95\linewidth]{./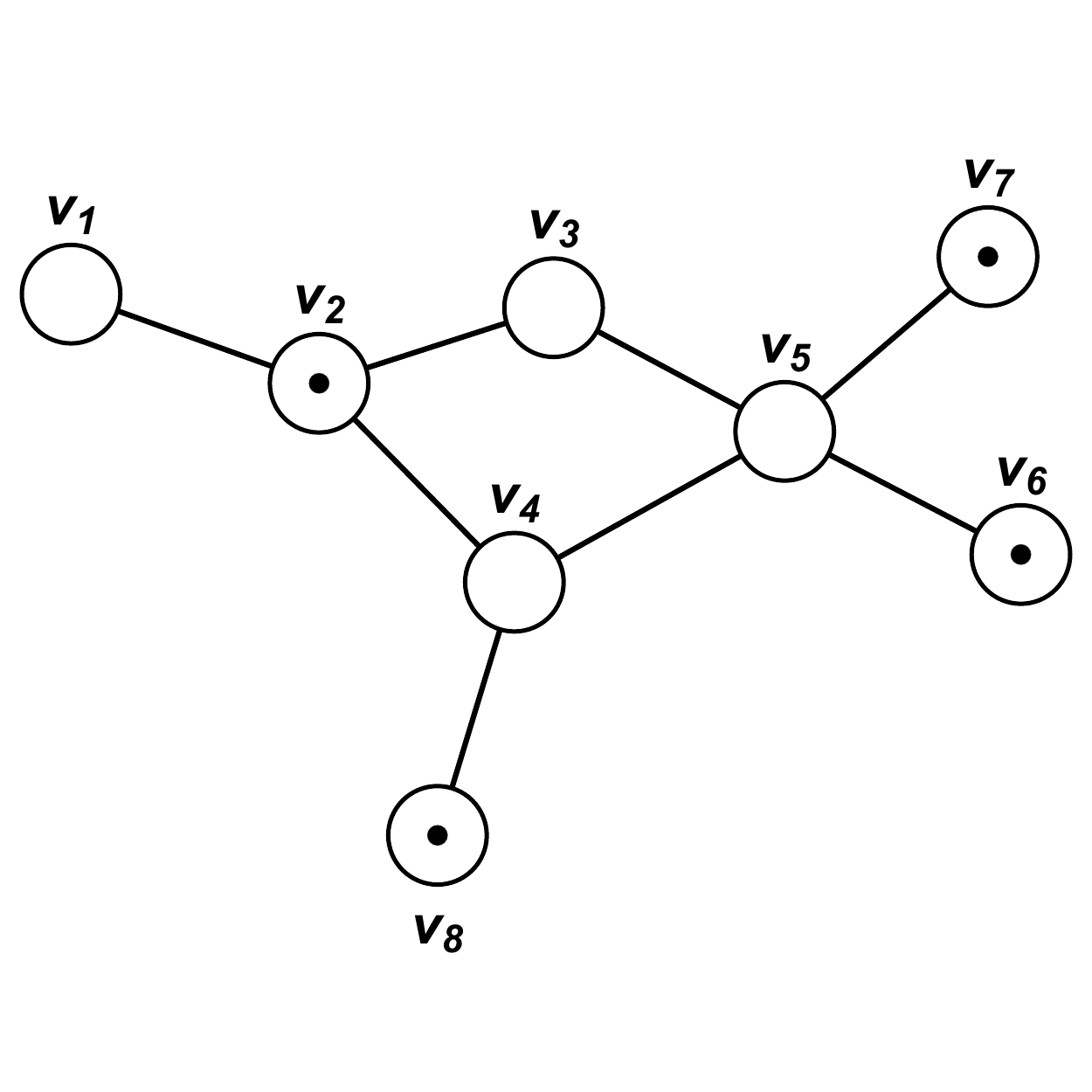}
		\caption{$t = 3 = T$}
		\label{fig:locHidingAlg:3}
	\end{subfigure}
	\caption{Position of $k = 4$ indistinguishable agents (represented by black dots) in consecutive steps
	  of sample execution of location hiding algorithm in a network with $n = 8$ nodes.}
	\label{fig:locHidingAlg}
\end{figure}
\section{Our Results}
\label{sect:results}
Most of our results apply to the single-agent case. We first show that if the topology of the graph is known then any well-hiding algorithm in a graph with $n$ nodes $m$ edges and diameter $D$ needs $\Omega(D)$ steps, and there exists a perfectly hiding algorithm that needs $O(D)$ steps. Then we  show how to generalize this result to multi-agent case.

Secondly we focus on the model where the topology of the network is unknown. We first show that in the deterministic model with no memory there exists no well-hiding algorithm. On the other hand in the randomized setting with no memory we show a well-hiding algorithm whose expected running time is $\widetilde{O}(n^3)$ with high probability.
Finally if the agents are given unlimited memory then $\Theta(m)$ is necessary  for perfectly hiding and sufficient  for well-hiding algorithms.

Our results are summarized in Table~\ref{tab1}.
\begin{table}
	\caption{Overview of our results}\label{tab1}
	\centering
	\begin{tabularx}{\textwidth}{XXXX}
		\toprule
		& & Deterministic & Randomized \\\midrule
		Known topology & & $\Theta(D)$ (Thm.\ref{thm:known}) & $\Theta(D)$ (Thm.\ref{thm:known}) \\\midrule
		\multirow{2}{1pt}{Unknown\\topology} & No memory & impossible (Thm.~\ref{thm:detNoMemo}) & $\widetilde{O}(n^3)$ w.h.p. (Thm.~\ref{thm:unknowntopoMemorylessRnd}) \\ \cline{2-4}
		 & Unlimited Memory & $\Theta(m)$ (Thm.~\ref{thm:DFS},~\ref{thm:unknownMemory}) & $\Theta(m)$ (Thm.~\ref{thm:DFS},~\ref{thm:unknownMemory}) \\
		\bottomrule
	\end{tabularx}
\end{table}

\section{Previous and Related Work}\label{sect:rel}

The problems of security and privacy protection in distributed systems and mobile sensor networks have recently
received a lot of attention. Various security aspects of such systems have been extensively discussed and
a lot of novel solutions for some practical settings have been proposed over the last years.
One of the major examples is the problem of designing routing protocols for wireless ad hoc networks
which can hide the network topology from potential external and internal adversaries.
(see e.g. \cite{Zhang:2014:TOHIP}, \cite{Pani:2016:TSOR} and the references therein). The goal of such protocols is to
find reliable routes between the source and destination nodes in the network which are as short as possible,
reducing exposure of the network topology to malicious nodes by data transmitted in the packets. This will prevent
adversaries (at least to some extent) for launching some kinds of attacks requiring the knowledge of the network
topology which may be particularly harmful for the whole network and the tasks performed. It is desired that such
algorithms should meet some security requirements like confidentiality, integrity or authenticity.
This can be achieved e.g. by ensuring that routing packets do not carry any information of the discovered routes that
can be exploited by an adversary, as proposed in \cite{Pani:2016:TSOR}.

Another important line of research is the issue of assuring privacy of the users of mobile applications
utilizing the information of their location, like e.g. in activity-based social networks, where users share the
information about location-based activities they perform (see e.g. \cite{Pham:2014:Activity}). The exact knowledge
of such location data by the service providers in some cases may rise serious concerns about users' privacy.
Because popularity and usage of such applications is still increasing, there is a need of designing the algorithms
which allows for hiding the exact users' location without significant limitations of the services. 

However, in some cases the performance of proposed protocols is evaluated only experimentally and the 
discussion of security properties of the solutions is somewhat informal, without referring to any theoretical
model (cf. \cite{Pani:2016:TSOR}, \cite{Pham:2014:Activity}).

To the best of our knowledge, there is no rigid and formal analysis on the problem of location hiding in graphs
and it has never been studied before in the context considered in this paper.
The problems of ensuring security and privacy in distributed systems mentioned above are similar
to our only to a certain extent. The aim of our approach is to propose a formal and general model of hiding the
positions of a set of mobile agents from an external observer and consider its basic properties and limitations.
However, the problem considered by us is closely related to some of the most fundamental agent-based graph problems.

First of all observe the relation to the exploration that comes from the global nature of our problem. Clearly if the
agent has at least logarithmic memory then we can use algorithms for graph exploration. Indeed, since the graph
is labeled, it is sufficient to explore the graph and move to a vertex with minimum ID. Hence the vast body of
literature about exploration in various models applies to our problem. In particular there exist polynomial
deterministic algorithms (using Universal Sequences) that need only logarithmic memory~\cite{AleliunasKLLR79,Reingold08}.

In the randomized setting, location hiding becomes related to the problem of reaching the stationary distribution
by a Markov Chain (Mixing Time) as well as visiting all the states (Cover Time). 
The cover time of a random walk is the expected number of steps that are needed by a random walk to visit all the
vertices. It is known that for a (unbiased) random walk, the cover time is between $\Omega(n\log n)$~\cite{FeigeLower}
and $O(n^3)$~\cite{FeigeUpper}. There exist biased random walks that achieve worst-case cover time of
$\widetilde{O}(n^2)$~\cite{NonakaOSY10} however in order to implement such walks the agent is required to have access to
some memory to acquire information necessary to compute the transition probabilities. It has been recently shown that in
some graphs multiple random walks are faster than a single one~\cite{AlonAKKLT11,EfremenkoR09,ElsasserS11}.
Another  interesting line of work is deriving biased random walks (~\cite{BoydDX04,Diaconis2}).
 
\section{Location Hiding for Known Topology}
\label{sect:known}

Let us first focus on the setting where the topology of the underlying network 
is known to the agents and consider one of the simplest possible protocols, namely
every mobile agent goes from its initial positions to some fixed vertex $v^{*} \in V$
(this is possible, because in the considered scenario the vertices in the graph have unique
labels known to all the agents).
One can easily see that this simple protocol is perfectly hiding.
Indeed, regardless of the distribution of the agents' initial placement,
after executing the protocol all devices are always located in the same
vertex known in advance. Hence, $X_{T}$ and $X_0$ are independent and
$\MutInf{X_0}{X_T}$ = 0 (and therefore $\UC{X_0}{X_T}$ = 0, as required).
But this approach leads to the worst case time and energy complexity for a single device
of order $\Theta(D)$, where $D$ is the graph diameter. Appropriate selection of the
vertex $v^{*}$ as an element of the graph center can reduce the worst case complexity
only by a constant, but it does not change its order. The natural question that arises in this context is whether there exist a perfectly hiding
(or at least well hiding) protocol that requires asymptotically smaller number of rounds
for ensuring privacy than the simple deterministic protocol discussed above.
In general, we are interested in determining the minimal number of steps required by any
location hiding protocol in considered scenarios for ensuring a given level of security
(in terms of the amount of information being revealed) for arbitrary distribution of initial
configuration of the agents and for arbitrary underlying network.

\subsection{Single Agent Scenario}

Let us consider the simple scenario where there is only one mobile device in the network
located in some vertex $v \in V$ 
 according to some known
probability distribution $\LLL$ over the set of vertices. Assume that the network topology
is known to the agent. Our goal is to find the lower bound on the number of steps that
each well hiding protocol requires to hide the original location of the device in this scenario
for arbitrary graph $G$ and initial distribution $\LLL$.

We will start with a general lemma. We would like to show that if within $t$ steps the sets of vertices visited by the
algorithm starting from two different vertices are disjoint with significant probability, then the algorithm is not
well hiding within time $t$. Intuitively it is because it is easy to deduce which was the starting position.
This result formalizes the following lemma.

\begin{lemma}
	\label{lem:known}
	Let $\AAA$ be any hiding algorithm and $G = (V,E)$ be an arbitrary graph. Suppose that for some $t > 0$ and some
	positive constant $\gamma$ there exist two distinct vertices $u, v \in V$ such that with probability at least 
	$1/2 + \gamma$ the following property holds: sets $V_1$ and $V_2$ of vertices reachable after the execution of $t$
	steps of $\AAA$ when starting from $u$ and $v$, respectively, are disjoint. Then $\AAA$ is not well hiding in time $t$.
\end{lemma}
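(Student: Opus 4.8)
We want to show: if starting from $u$ versus $v$, the reachable vertex sets after $t$ steps are disjoint with probability $\geq 1/2+\gamma$, then $\AAA$ fails to be well hiding at time $t$. The natural plan is to construct an explicit "bad" initial distribution $X_0$ — one with non-zero entropy — and show $\UC{X_0}{X_T}$ is bounded below by a positive constant, contradicting $\epsilon \in \SmallO{1}$.

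**Choosing the distribution and lower-bounding mutual information.**

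The plan is to take $X_0 = \Unif{\{u,v\}}$, so $\HH{X_0} = 1$ bit. Then $\UC{X_0}{X_T} = \MutInf{X_0}{X_T}$, and it suffices to show $\MutInf{X_0}{X_T} \geq c$ for some constant $c = c(\gamma) > 0$. The cleanest route is via Fano's inequality (recalled in the appendix): an adversary seeing $X_T$ wants to guess $X_0 \in \{u,v\}$. I would describe the explicit predictor: having observed the final location $X_T = w$, output $u$ if $w$ is reachable from $u$ in $t$ steps (i.e. $w \in V_1$ in the conditioning event), output $v$ if $w \in V_2$, and guess arbitrarily otherwise. Condition on the probability-$(1/2+\gamma)$ event $\mathcal{E}$ that $V_1 \cap V_2 = \emptyset$. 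On $\mathcal{E}$, the true final location always lies in the reachable set of the true start, and — crucially by disjointness — lies in the reachable set of the \emph{other} start with probability $0$; hence the predictor is correct. Off $\mathcal{E}$ it is correct with probability $\geq 1/2$ (worst case). So the error probability satisfies $p_{\mathrm{err}} \leq (1/2+\gamma)\cdot 0 + (1/2-\gamma)\cdot(1/2) = 1/4 - \gamma/2 < 1/4$. Fano's inequality then gives $\HH{X_0 \mid X_T} \leq H_b(p_{\mathrm{err}}) < H_b(1/4) < 1$, so $\MutInf{X_0}{X_T} = \HH{X_0} - \HH{X_0\mid X_T} \geq 1 - H_b(1/4) > 0$, a positive constant independent of $n$ and $G$. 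Since this holds for every $n$ (the distribution and bound do not degrade), $\UC{X_0}{X_T} \not\in \SmallO{1}$, so $\AAA$ is not well hiding at time $t$.

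**Handling the subtlety about "reachable" sets.**

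The main thing to be careful about is the precise meaning of "set of vertices reachable after $t$ steps of $\AAA$." In the randomized model the actual final vertex is a random variable whose support (conditioned on the start) is exactly this reachable set; in the deterministic model the set is a singleton and the hypothesis forces the two singletons to differ, so the adversary recovers $X_0$ perfectly and $\MutInf{X_0}{X_T} = \HH{X_0}$, an even stronger conclusion. The one point needing a line of justification is that the predictor is well-defined: on the event $\mathcal{E}$ the sets are disjoint, so "$w \in V_1$" and "$w \in V_2$" are mutually exclusive, and the observed $w$ always falls in the reachable set of the true start, so at least one branch fires correctly. I also need to note that $X_T$ here denotes the location, which is all the adversary sees in the single-agent case, and that $V_1, V_2$ are themselves random (depending on $\AAA$'s coins), so strictly the argument conditions jointly on $\mathcal{E}$ and on $X_T$; the bound $p_{\mathrm{err}} < 1/4$ survives averaging. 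I do not expect a real obstacle here — the only mild care is keeping the conditioning clean and invoking Fano in its conditional-entropy form rather than fumbling with a direct entropy computation.
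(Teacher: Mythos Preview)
Your proof is correct and reaches the same conclusion as the paper, but the packaging differs. Both arguments fix the two-point prior $X_0 \sim \Unif{\{u,v\}}$ so that $\HH{X_0}=1$, and both ultimately show $\HH{X_0\mid X_t}\leq 1-\eta$ for a constant $\eta=\eta(\gamma)>0$. The paper does this by a direct computation: for each $y$ it argues that $\Pr[X_0=u\mid X_t=y]$ is at least $1/2+\gamma$ (or at most $1/2-\gamma$), and then bounds $\HH{X_0\mid X_t}=\sum_y \Pr[X_t=y]\,H_b(p_{u|y})$ using that the binary entropy $H_b$ is strictly below $1$ away from $1/2$. You instead route through Fano's inequality with an explicit predictor. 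These are really two phrasings of the same mechanism---your Fano bound $\HH{X_0\mid X_t}\leq H_b(p_{\mathrm{err}})$ is exactly what the paper's per-$y$ bound gives once averaged---so the difference is stylistic rather than structural. Your version is a touch cleaner to state; the paper's is more self-contained.

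Two small corrections. First, Fano's inequality is \emph{not} recalled in the paper's appendix; you would need to state it (or simply unfold it into the direct binary-entropy computation, which is what the paper does). Second, your claim that ``off $\mathcal{E}$ the predictor is correct with probability $\geq 1/2$'' is not automatic: the predictor is a fixed function of $X_T$ and does not see whether $\mathcal{E}$ occurred, so you cannot split its behaviour by cases on $\mathcal{E}$. The clean fix is either to use the MAP rule and bound its error by $1/2-\gamma$ directly, or to note (as the paper implicitly does) that for every realised $y$ the posterior satisfies $\Pr[X_0=u\mid X_t=y]\geq \Pr[X_0=u,\,\xi_V=1\mid X_t=y]\geq 1/2+\gamma$ when $y$ lies on $u$'s side. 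You correctly flag that the randomness of $V_1,V_2$ needs care; the paper's own proof is equally informal on this point, and in both applications of the lemma the sets are in fact deterministic, so no real obstacle arises.
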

\begin{proof}
  Fix an arbitrary graph $G = (V,E)$ with $|V| = n$ and hiding algorithm $\AAA$. Let $u, v \in V$ be two vertices such
	that the sets $V_1$ and $V_2$ of possible location of the agent after performing $t$ steps of $\AAA$ when starting in
	$u$ and $v$, respectively, are disjoint with probability at least $1/2 + \gamma$ for some constant $\gamma > 0$
	regardless of the starting point, i.e. $\Pr[\xi_V = 1] \geq 1/2 + \gamma$, where $\xi_V$ is an indicator random variable
	of the event $V_1 \cap V_2 = \emptyset$.
	Consider the following two-point distribution $\LLL$ 	of the agents' initial location $X_0$: $\Pr[X_0 = u] =
	\Pr[X_0 = v] = 1/2$, $\Pr[X_0 = w] = 0$ for $w \in V \setminus \{u, v\}$. 
	We will prove that such $\AAA$ does not ensure that the initial position $X_0$ of the device is well hidden
	at time $t$ when $X_0 \sim \LLL$.
	
	It is easy to calculate that $\HH{X_0} = 1$. Hence $\UC{X_0}{X_t} = \MutInf{X_0}{X_t}$ and it suffices to show that
	the mutual information $\MutInf{X_0}{X_t} \geq \eta > 0$ for some positive constant $\eta$.
	This in turn is equivalent to $\HH{X_0|X_t} \leq 1 - \eta$, as follows from Fact~\ref{eq:MutInf2}. 

	It is clear that for $y \in V_1$
	\begin{equation}
	  \label{eq:CondProbLem1}
		\Pr[X_0 = u | X_t = y] \geq \Pr[X_0 = u, \xi_V = 1 | X_t = y] \geq 1/2 + \gamma 
	\end{equation}
	and the same holds after replacing $u$ with $v$. Moreover $\Pr[X_0 = v | X_t = y] = 1 - \Pr[X_0 = u | X_t = y]$.
	Denoting $\Pr[X_0 = u | X_t = y]$ by $p_{u|y}$ we have
	\begin{align*}
		\HH{X_0|X_t} & = -\sum_{y \in V} \Pr[X_t=y] \sum_{x \in V} \Pr[X_0=x|X_t=y] \log(\Pr[X_0=x|X_t=y])\\
		             & = -\sum_{y \in V} \Pr[X_Tt=y] \left(p_{u|y} \log(p_{u|y}) + (1-p_{u|y}) \log(1 -p_{u|y})\right)~.
	\end{align*}
	Let us consider the function $f(p) = -(p\log(p) + (1-p)\log(1-p))$ for $p \in (0,1)$. It is easy to see that
	$\lim_{p \rightarrow 0} f(p) = \lim_{p \rightarrow 1} f(p) = 0$ and $f(p)$ has its unique maximum on the interval (0,1) 
	equal to 1 at $p = 1/2$.
	From \ref{eq:CondProbLem1} we have $(\forall{y \in V_1})(p_{u|y} \geq 1/2 + \gamma)$ and 
	$(\forall{y \in V_2})(p_{u|y} \leq 1/2 - \gamma)$. Therefore, there exists some positive constant
	$\eta$ such that $f(p_{u|y}) \leq 1-\eta$ 
	From the definition of the sets $V_1$ and $V_2$ we also have $\Pr[X_t = y \notin V_1 \cup V_2] = 0$.
	Using these facts the conditional entropy $\HH{X_0|X_t}$ can be rewritten as
	\begin{eqnarray*}
		\HH{X_0|X_t} &    = & \sum_{y \in V_1} \Pr[X_t=y] f(p_{u|y}) + \sum_{y \in V_2} \Pr[X_t=y] f(p_{u|y}) \\
								 & \leq & (1-\eta) \Pr[X_t \in V_1 \cup V_2] = 1 - \eta
	\end{eqnarray*}
	for some constant $\eta > 0$, as required.
	Hence, the lemma is proved.
\end{proof}

From the lemma we get as a corollary the lower bound of $\Omega(D)$ on the expected number of steps needed by any
well hiding algorithm in the model with known topology. Note that the lower bound matches the simple $O(D)$ upper bound.

\begin{theorem}
	\label{thm:known}
	For a single agent and known network topology and for an arbitrary graph $G$ there exists a distribution
  $\LLL$ of agent's initial position such that  any well hiding algorithm $\AAA$ needs to perform at least
	$\left\lfloor D/2 \right\rfloor$ steps with probability $c \geq 1/2 - \SmallO{1}$,
	where $D$ is the diameter of $G$.
\end{theorem}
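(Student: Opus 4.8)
The plan is to apply Lemma~\ref{lem:known} with a well-chosen pair of vertices and a well-chosen number of steps $t$. Fix an arbitrary graph $G=(V,E)$ of diameter $D$, and pick two vertices $u,v\in V$ realizing the diameter, i.e.\ with $\mathrm{dist}(u,v)=D$. Set $t=\lfloor D/2\rfloor - 1$ (or $\lfloor D/2\rfloor$ depending on parity bookkeeping; I'll fix the exact constant below). After $t$ steps, any agent — deterministic or randomized — starting from $u$ can only occupy vertices in the ball $B(u,t)$ of radius $t$ around $u$, and similarly an agent started at $v$ stays in $B(v,t)$. The key geometric fact is that if $t < D/2$ then $B(u,t)\cap B(v,t)=\emptyset$: any common vertex $w$ would give $\mathrm{dist}(u,v)\le \mathrm{dist}(u,w)+\mathrm{dist}(w,v)\le 2t < D$, a contradiction. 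Hence the sets $V_1,V_2$ of reachable vertices are disjoint \emph{with probability $1$}, so in particular with probability at least $1/2+\gamma$ for any constant $\gamma<1/2$.

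First I would make the reachability claim precise: in each round an agent moves along at most one edge, so after $t$ rounds it has moved at most $t$ edges from its start, placing it within graph-distance $t$; this holds for every possible sequence of (random or deterministic) choices, so $V_1\subseteq B(u,t)$ and $V_2\subseteq B(v,t)$ deterministically. Then I would invoke Lemma~\ref{lem:known}: since the disjointness event holds with probability $1\ge 1/2+\gamma$, the algorithm $\AAA$ is not well hiding in time $t=\lfloor D/2\rfloor-1$. Consequently, for the two-point distribution $\LLL$ used in the lemma's proof ($\Pr[X_0=u]=\Pr[X_0=v]=1/2$), any well-hiding $\AAA$ must run for at least $\lfloor D/2\rfloor$ rounds. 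To get the ``with probability $c\ge 1/2-\SmallO{1}$'' phrasing — which accounts for randomized algorithms whose running time $T$ is itself a random variable — I would argue that if $\AAA$ halted in fewer than $\lfloor D/2\rfloor$ steps with probability exceeding $1/2+\SmallO{1}$, then conditioning on that event yields a sub-algorithm whose output still reveals the start (the agent is confined to one of the two disjoint balls), contradicting $\MutInf{X_0}{X_T}=\SmallO{1}$; a short computation of conditional entropy along the lines of Lemma~\ref{lem:known} closes this.

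The main obstacle I anticipate is the bookkeeping around \emph{randomized halting times} and the exact constant in ``$c\ge 1/2-\SmallO{1}$'': Lemma~\ref{lem:known} is stated for a fixed deterministic number of steps $t$, whereas a randomized algorithm may stop early on some random executions. Bridging this requires decomposing the execution by the event $\{T < \lfloor D/2\rfloor\}$ and showing that on this event the mutual information contribution is bounded below by a positive constant times $\Pr[T<\lfloor D/2\rfloor]$, so that well-hiding ($\MutInf{X_0}{X_T}=\SmallO{1}$, hence $\UC{X_0}{X_T}=\SmallO{1}$ since $\HH{X_0}=1$) forces $\Pr[T<\lfloor D/2\rfloor]=\SmallO{1}$, i.e.\ $\Pr[T\ge\lfloor D/2\rfloor]\ge 1-\SmallO{1}$ — which is even stronger than the claimed $1/2-\SmallO{1}$, with the weaker bound presumably stated to leave slack for the parity of $D$ and for the precise radius at which the two balls first touch. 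Everything else — the ball-intersection argument and the appeal to Lemma~\ref{lem:known} — is routine.
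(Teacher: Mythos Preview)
Your proposal is correct and follows essentially the same route as the paper: pick $u,v$ at distance $D$, use the two-point initial distribution, observe via the triangle inequality that the balls $B(u,\lfloor D/2\rfloor)$ and $B(v,\lfloor D/2\rfloor)$ are disjoint, and then invoke Lemma~\ref{lem:known} after noting that if $\Pr[T<\lfloor D/2\rfloor]\ge 1/2+\gamma$ the reachable sets are disjoint with at least that probability. The paper handles the randomized halting time exactly as you anticipate (by contraposition on $\Pr[T<\delta]\ge 1/2+\gamma$), and indeed only extracts the $1/2-\SmallO{1}$ bound directly from Lemma~\ref{lem:known} rather than the stronger $1-\SmallO{1}$ you sketch.
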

\begin{proof}
	We will show that for each graph $G$ there exist a distribution of the initial state of the mobile agent such that
	each well hiding algorithm $\AAA$ needs at least $\left\lfloor D/2 \right\rfloor$ rounds with some probability
	$c \geq 1/2 - \SmallO{1}$. 
	
	Fix an arbitrary graph $G = (V,E)$ with $|V| = n$. Let $u, v \in V$ be two vertices such that $d(u,v) = D$.
	Denote by $\delta = \left\lfloor D/2 \right\rfloor$ and consider the following two-point distribution $\LLL$
	of the agents' initial location $X_0$: $\Pr[X_0 = u] = \Pr[X_0 = v] = 1/2$, $\Pr[X_0 = w] = 0$ for $w \in V \setminus \{u, v\}$.
	Suppose that some hiding algorithm $\AAA$ terminates with probability at least $1/2 + \gamma$ for some constant
	$\gamma > 0$ after $T < \delta$ steps regardless of the starting point, i.e. $\Pr[T < \delta] \geq 1/2 + \gamma$.
	
	Obviously there is no $z \in V$ such that $d(u,z) < \delta$ and $d(v,z) < \delta$
	(if so, $D = d(u, v) < 2 \left\lfloor D/2\right\rfloor \leq D$ and we will get a contradiction).
	Let us define $B(u,\delta) = \{y \in V \colon d(u,y) < \delta\}$ and $B(v,\delta) = \{y \in V \colon d(v,y) < \delta\}$.
	It is clear that $B(u,\delta) \cap B(v,\delta) = \emptyset$.
	From the assumptions on the algorithm' running time with probability at least $1/2 + \gamma$ the sets $V_1$
	and $V_2$ of vertices reachable from $u$ and $v$, respectively, fulfills $V_1 \subseteq B(u,\delta)$ and
	$V_2 \subseteq B(v,\delta)$, therefore they are disjoint.
	Hence it suffices to apply the results from Lemma \ref{lem:known} to complete the proof.
\end{proof}

\subsection{Location Hiding for \texorpdfstring{$k$}{k} Agents and Known Network Topology}

Let us recall that the energy complexity of an algorithm $\AAA$
in the multi-agent setting is defined as the maximal distance covered (i.e. number of moves) in the execution of $\AAA$
over all agents. 

In the general scenario considered in this section a similar result holds as for the single-agent case. Namely, each algorithm 
which ensures the well hiding property regardless of the distribution of agents' initial placement requires in the worst case
$\Omega(D)$ rounds. 
\begin{lemma}
	\label{lem:knownMulti}
  For known network topology and $k > 1$ indistinguishable agents initially placed according to some arbitrary
	distribution $\LLL$, any well hiding algorithm $\AAA$ for an arbitrary graph $G$ representing the underlying network
	has energy complexity at least $\left\lfloor D/2 \right\rfloor$ with probability $c \geq 1/2 - \SmallO{1}$,
	where $D$ is the diameter of $G$.
\end{lemma}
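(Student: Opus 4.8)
The plan is to reduce the $k$-agent case to the single-agent argument of Lemma~\ref{lem:known} and Theorem~\ref{thm:known} by forcing all agents to start from one common vertex. First I would fix an arbitrary graph $G=(V,E)$ and pick two vertices $u,v$ with $d(u,v)=D$, and set $\delta=\left\lfloor D/2\right\rfloor$. As the initial distribution $\LLL$ I take the two-point distribution on \emph{configurations}: with probability $1/2$ all $k$ agents sit at $u$, and with probability $1/2$ all $k$ agents sit at $v$ (every other configuration has probability $0$). Since the agents are indistinguishable these are two distinct configurations and $\HH{X_0}=1$, so again $\UC{X_0}{X_T}=\MutInf{X_0}{X_T}$ and it suffices to bound the mutual information away from $0$.

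Next I would establish disjointness of the reachable configuration sets under the low-energy event. Suppose, towards a contradiction, that $\AAA$ is well hiding yet has energy complexity strictly less than $\delta$ with probability at least $1/2+\gamma$ for some constant $\gamma>0$, regardless of the common starting vertex. Because the energy complexity is the maximum number of moves made by any single agent, on this event every agent has moved fewer than $\delta$ times; hence if all agents started at $u$, the terminal configuration $X_T$ is supported inside the ball $B(u,\delta)=\{y : d(u,y)<\delta\}$, and symmetrically for $v$. As in the proof of Theorem~\ref{thm:known}, $B(u,\delta)\cap B(v,\delta)=\emptyset$ since $d(u,v)=D\geq 2\delta$, and therefore the set $\mathcal C_u$ of configurations supported on $B(u,\delta)$ is disjoint from the set $\mathcal C_v$ of configurations supported on $B(v,\delta)$ (both supports are non-empty as $k\geq 1$).

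Then I would repeat essentially verbatim the entropy computation of Lemma~\ref{lem:known}, with vertices replaced by configurations: writing $p_{u|y}=\Pr[X_0=(\text{all at }u)\mid X_T=y]$, the low-energy event forces $p_{u|y}\geq 1/2+\gamma$ for every $y\in\mathcal C_u$ and $p_{u|y}\leq 1/2-\gamma$ for every $y\in\mathcal C_v$, while conditioned on low energy $\Pr[X_T\notin\mathcal C_u\cup\mathcal C_v]=0$. Using the bound $f(p)=-(p\log p+(1-p)\log(1-p))\leq 1-\eta$ for $p$ bounded away from $1/2$, this yields $\HH{X_0\mid X_T}\leq 1-\eta'$ for a constant $\eta'>0$ (the residual probability mass on the high-energy event only subtracts a further constant), hence $\MutInf{X_0}{X_T}\geq\eta'$, contradicting Definition~\ref{def:wellHiding}. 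Consequently a well hiding $\AAA$ cannot keep its energy complexity below $\delta$ with probability $1/2+\gamma$ for any constant $\gamma>0$, i.e. it has energy complexity at least $\left\lfloor D/2\right\rfloor$ with probability $c\geq 1/2-\SmallO{1}$.

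I expect the only genuine subtlety — exactly as in the single-agent proof — to be the bookkeeping when conditioning the disjointness/low-energy event on the value of $X_T$ and on the independent random executions of $\AAA$ from the two starting configurations; this is handled as in Lemma~\ref{lem:known}, so the cleanest write-up is probably to first restate Lemma~\ref{lem:known} with ``vertices'' replaced by ``configurations'' and ``number of steps'' by ``energy complexity'', and then apply it to the configuration-valued $X_0$ above. Everything else is a direct transcription of the single-agent case.
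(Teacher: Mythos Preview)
Your proposal is correct and follows essentially the same route as the paper: choose $u,v$ at distance $D$, place all $k$ agents at $u$ or at $v$ each with probability $1/2$, observe that if the energy complexity (the maximum number of moves over the agents) is below $\delta=\lfloor D/2\rfloor$ then the terminal configuration lies in one of two disjoint subsets of the configuration space determined by the balls $B(u,\delta)$ and $B(v,\delta)$, and then rerun the entropy calculation of Lemma~\ref{lem:known}. The paper's write-up is precisely your suggested ``restate Lemma~\ref{lem:known} with vertices replaced by states'' strategy, so there is nothing to add.
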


The proof of the Lemma \ref{lem:knownMulti} proceeds in the same vein as in Theorem \ref{thm:known}. We choose two
vertices $u, v$ in distance $D$ and put all agents with probability $1/2$ in any of these vertices. Denoting by
$T_i$, $1 \leq i \leq k$, the number of steps performed by the agent $i$ and by $T = \max_{1 \leq i \leq k} T_i$
the energy complexity of the algorithm we consider a hiding algorithm $\AAA$ such that $\Pr[T < \delta] \geq 1/2 + \gamma$
for $\delta = \left\lfloor D/2 \right\rfloor$ and some positive constant $\gamma > 0$.
The only difference is that instead of the sets $B(u,\delta) = \{y \in V \colon d(u,y) < \delta\}$ and
$B(v,\delta) = \{y \in V \colon d(v,y) <\delta\}$ itself we consider the subsets $\states_1$ and $\states_2$
of the state space consisting of such states that all of the agents are located only in the vertices from the set
$B(u,\delta)$ or $B(v,\delta)$, respectively. Similar calculations as previously led to the conclusion that any such
algorithm cannot ensure the well hiding property.

Let us notice that the natural definition of the energy complexity in the multi-agent case as the maximum number of
moves over all agents performed in the execution of the algorithm allows us for direct translation of the result
from the single-device setting, as presented above. Nevertheless, another interesting questions are what happens
if we define the energy complexity as the total or average number of moves over all devices.

\section{Location Hiding for Unknown Topology}
\label{sect:unknown}

In this section we consider the case when the agents do not know the topology of the network.
  
\subsection{No Memory}
If no memory and no information about the topology is available but the agent is given access to a source of randomness,
the agent can perform a random walk in order to conceal the information about its starting position. But the agent would
not know when to stop the walk. If in each step it would choose to terminate with probability being a function of the 
degree of the current node, one could easily construct an example in which the agent would not move far from its
original position (with respect to the whole size of the network). Hence in this section we assume that the size of the
network is known. Then interestingly the problem becomes feasible. Consider the following algorithm $\AAA(q)$: in each
step we terminate with probability $q$ (roughly $n^{-3}$), and with probability $1-q$ we make one step of a lazy random
walk. We will choose the appropriate $q$ later. Let us point out that letting the random walk to stay in the current
vertex in each step with some fixed constant probability is important for ensuring the aperiodicity of the Markov
process (see e.g. \cite{LevinPeresWilmerMCMT} and \cite{lovasz1993random}). Otherwise we can easily provide an example
where such algorithm does not guarantee the initial position will be hidden. Namely, consider an arbitrary bipartite
graph and any initial distribution s.t. the agent starts with some fixed constant probability either in some fixed
\emph{black} or \emph{white} vertex. Then, assuming the adversary is aware only of the algorithm' running time (i.e. the
number of steps the agent performed), when observing agents' position after $T$ steps the adversary can with probability
1 identify its initial position depending on $T$ is even or odd. Nevertheless, the probability of remaining in a given
vertex can be set to arbitrary constant $0 < c < 1$. For the purposes of analysis we have chosen $c = 1/2$ which leads
to the classical definition of lazy random walk (see Definition \ref{def:rw} and Fact \ref{fact:RWMixing}
in Appendix \ref{sect:apendMC}).

\begin{algorithm}
	\label{alg:unknownMemorylessRand}
	\textbf{Algorithm $\AAA(q)$} [randomization, no memory, no topology, knowledge of $n$]\\
	In each round: 
	\begin{enumerate}
		\item With probability $q$: terminate the algorithm.
		\item With probability $\frac12$: remain in the current vertex until the next round. 
		\item With probability $\frac12 - q$: move to a neighbor chosen uniformly at random.
	\end{enumerate}
\end{algorithm}

\begin{theorem}
\label{thm:unknowntopoMemorylessRnd}
The algorithm $\AAA(q)$ described above based on the random walk with termination probability
$q = \frac{f(n)}{n^{3} \log{h(n)}}$ for arbitrary fixed $f(n) = \SmallO{1}$ and
$h(n)$ = $\SmallOmega{\max\{n^{2}, \frac{1}{\HH{X_0}}\}}$ is well hiding for any graph $G$ and any distribution
of agent's initial location $X_0$.
\end{theorem}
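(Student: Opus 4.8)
The plan is to exploit that the number $T$ of rounds executed by $\AAA(q)$ is \emph{independent} of the starting vertex and has $T\sim\Geo{q}$, so that except on an event of probability $\BigO{t^{\ast}q}$ the value of $T$ is large enough for a lazy random walk on $G$ to have mixed almost perfectly, while on that exceptional event we fall back on the trivial estimate $\MutInf{X_0}{X_T}\le\HH{X_0}$. Fix an arbitrary connected graph $G=(V,E)$ and an arbitrary initial distribution with $\HH{X_0}>0$. Conditioned on $\{T=t\}$, the final position is obtained by applying $t$ steps of the ``lazy-type'' walk built from steps~2--3 of $\AAA(q)$ (rescaled by $1-q$): its transition matrix $Q$ holds with probability $\ge\tfrac12$, is reversible with respect to $\pi(v)=\deg(v)/2m$, is aperiodic, and has all eigenvalues in $(0,1]$; being a constant-laziness perturbation of the standard lazy walk, its spectral gap $\gamma:=1-\lambda_2(Q)$ is $\BigOmega{n^{-3}}$ by the classical $\BigO{n^{3}}$ bound on the mixing (hence relaxation) time of random walks on connected $n$-vertex graphs (Fact~\ref{fact:RWMixing}), and $\pi_{\min}\ge\tfrac1{2m}\ge n^{-2}$. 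As $T\perp X_0$ we have $\HH{X_0\mid T}=\HH{X_0}$, and since $X_T$ is a function of $(X_T,T)$,
\[
\MutInf{X_0}{X_T}\;\le\;\MutInf{X_0}{X_T\mid T}\;=\;\sum_{t\ge0}\Pr[T=t]\,\MutInf{X_0}{X_T\mid T=t}.
\]

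Now bound each conditional term two ways. For every $t$, $\MutInf{X_0}{X_T\mid T=t}\le\HH{X_0\mid T=t}=\HH{X_0}$. For $t$ beyond a threshold $t^{\ast}$ I instead use that a mixture minimises average divergence: taking $\pi$ as reference distribution,
\[
\MutInf{X_0}{X_T\mid T=t}\;\le\;\max_{v\in V}\DD{Q^{t}(v,\cdot)}{\pi}\;\le\;\max_{v\in V}\chi^{2}\!\bigl(Q^{t}(v,\cdot),\pi\bigr)\;\le\;\frac{\lambda_2(Q)^{2t}}{\pi_{\min}}\;\le\;n^{2}e^{-2\gamma t},
\]
using $\DD{P}{\pi}\le\log(1+\chi^{2}(P,\pi))\le\chi^{2}(P,\pi)$ and the standard spectral estimate for the $\chi^{2}$-distance of a reversible chain to stationarity. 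Splitting the series at $t^{\ast}$ and using $\Pr[T<t^{\ast}]=1-(1-q)^{t^{\ast}}\le t^{\ast}q$,
\[
\MutInf{X_0}{X_T}\;\le\;\HH{X_0}\,t^{\ast}q\;+\;n^{2}\!\!\sum_{t\ge t^{\ast}}\!e^{-2\gamma t}\;\le\;\HH{X_0}\,t^{\ast}q\;+\;\frac{n^{2}e^{-2\gamma t^{\ast}}}{1-e^{-2\gamma}}.
\]
Choosing $t^{\ast}=\bigl\lceil\tfrac1{2\gamma}\log\tfrac{n^{2}h(n)}{\gamma}\bigr\rceil$ makes the last term at most $1/h(n)$; and since $\gamma=\BigOmega{n^{-3}}$ while $h(n)=\SmallOmega{n^{2}}$ (so $\log h(n)\ge2\log n$ eventually, whence $\log\tfrac{n^2h(n)}{\gamma}=\BigO{\log h(n)}$), we get $t^{\ast}=\BigO{n^{3}\log h(n)}$ and therefore $t^{\ast}q=\BigO{n^{3}\log h(n)}\cdot\tfrac{f(n)}{n^{3}\log h(n)}=\BigO{f(n)}$.

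Putting the pieces together, $\UC{X_0}{X_T}=\MutInf{X_0}{X_T}/\HH{X_0}\le\BigO{f(n)}+\tfrac1{h(n)\HH{X_0}}$, and both summands are $\SmallO{1}$ — the first because $f(n)=\SmallO{1}$, the second because $h(n)=\SmallOmega{1/\HH{X_0}}$ forces $h(n)\HH{X_0}\to\infty$. Since the bound holds for \emph{every} $G$ and \emph{every} admissible $X_0$, $\AAA(q)$ is well hiding; that it terminates almost surely, and within $\BigO{q^{-1}\log n}$ rounds w.h.p., is immediate from $T\sim\Geo{q}$. The one genuinely delicate point is the case split on $T$: one must bound the not-yet-mixed contribution by $\HH{X_0}$ rather than attempt to control $\DD{Q^{t}(v,\cdot)}{\pi}$ uniformly in $t$, since the latter injects a spurious $\log n$ factor that is fatal when $\HH{X_0}$ is small; the remainder is routine modulo the $\BigO{n^{3}}$ relaxation-time bound and the elementary spectral $\chi^{2}$ estimate.
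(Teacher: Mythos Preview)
Your argument is correct, and it proceeds along a genuinely different technical route from the paper's.

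The paper conditions on the single event $\{T>\tmix{\varepsilon}\}$, then invokes the monotonicity of $\HH{X_0\mid X_t}$ along a Markov chain (Fact~\ref{fact:MCInfTh}) to reduce to a fixed time $\tmix{\varepsilon}$, and finally lower-bounds $\HH{X_0\mid X_{\tmix{\varepsilon}}}$ by a direct two-sum expansion controlled through total-variation closeness to $\pi$ and the Dragomir relative-entropy inequality (Fact~\ref{thm:RelEnt2}). You instead condition on the \emph{exact} value of $T$, use the data-processing step $\MutInf{X_0}{X_T}\le\MutInf{X_0}{(X_T,T)}=\MutInf{X_0}{X_T\mid T}$ (legitimate because $T\perp X_0$), and handle each $\MutInf{X_0}{X_T\mid T=t}$ via the barycentre identity $\sum_v p_0(v)\DD{Q^t(v,\cdot)}{\pi}=\MutInf{X_0}{X_T\mid T=t}+\DD{p_t}{\pi}$ together with the spectral $\chi^2$ bound for reversible chains. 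Your approach is arguably cleaner: it sidesteps the somewhat delicate entropy computation in~\eqref{eq:lem5p1}--\eqref{eq:lem5p3} and makes the role of the assumption $h(n)=\SmallOmega{1/\HH{X_0}}$ completely transparent in the final term $1/(h(n)\HH{X_0})$. The paper's approach, on the other hand, needs only the black-box $\BigO{n^3}$ mixing-time bound and never unpacks the spectral decomposition. Two cosmetic points: in base-$2$ logarithms the inequality $\DD{P}{\pi}\le\chi^2(P,\pi)$ picks up a $1/\ln 2$ constant (harmless here), and dropping the weight $\Pr[T=t]$ in the tail sum is valid but wasteful---keeping it would save a factor of $q/\gamma$, though this is not needed for the $\SmallO{1}$ conclusion.
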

\begin{proof}
Fix $\varepsilon > 0$. Let $\tmix{\varepsilon}$ denote the mixing time and $\pi$ the stationary distribution of the
random walk performed by the algorithm according to Definition \ref{def:MixingTime}. We will choose the exact value
for $\varepsilon$ later.

Let $X_0$ and $X_T$ denote the initial and final configuration, respectively.
In order to prove the lemma it suffices to show that $\frac{\HH{X_0|X_T}}{\HH{X_0}} = 1 - \SmallO{1}$,
what is equivalent to $\lim_{n \rightarrow \infty} \frac{\HH{X_0|X_T}}{\HH{X_0}} = 1$. This implies
that $\UC{X_0}{X_T} = \SmallO{1}$ as required by Definition \ref{def:wellHiding}.

Let $\xi_{\varepsilon} = \mathbf{1}[T > \tmix{\varepsilon}]$ be the indicator random variable defined as
$$
		\xi_{\varepsilon} = 
		\begin{cases}
				1, & \text{if } {T > \tmix{\varepsilon}}~,\\
				0, & \text{otherwise.}
		\end{cases}
$$
What we need to ensure is that the algorithm will stop with probability at least $1 - \SmallO{1}$ after 
$\tmix{\varepsilon}$ steps. It is clear that time $T$ when the station terminates
the execution of the algorithm follows $\Geo{q}$ distribution, hence
$\Pr[\xi_{\varepsilon} = 1] = (1-q)^{\tmix{\varepsilon}}$. Letting $q = f(n)/\tmix{\varepsilon}$ for
some $f(n) = \SmallO{1}$ implies $\Pr[\xi_{\varepsilon} = 1] = 1 - \SmallO{1}$, as required.

Let us consider $\HH{X_0|X_T}$. By Fact \ref{fact:condReduce} (see Appendix \ref{sect:apend}) we have
\begin{align*}
		\HH{X_0|X_T} & \geq \HH{X_0|X_T, \xi_{\varepsilon}} \geq \HH{X_0|X_T, \xi_{\varepsilon} = 1} \Pr[\xi_{\varepsilon} = 1]\\
		             & = (1 - \SmallO{1})\ \HH{X_0|X_T, \xi_{\varepsilon} = 1} \geq (1 - \SmallO{1})\ \HH{X_0|X_{\tmix{\varepsilon}}}~,
\end{align*}
where the last inequality follows directly from Fact \ref{fact:MCInfTh} in Appendix \ref{sect:apend}.

Let us introduce the following shorthand notation: $\Pr[X_0 = x] = p_{0}(x)$, $\Pr[X_{\tmix{\varepsilon}} = y] = p_{t}(y)$,
$\Pr[X_0 = x | X_{\tmix{\varepsilon}} = y] = p_{0}(x|y)$ and $\Pr[X_{\tmix{\varepsilon}} = y | X_0 = x] = p_{t}(y|x)$.
Obviously, $p_{0}(x|y) = \frac{p_{t}(y|x)}{p_{t}(y)} p_{0}(x)$.
Using this convention we can rewrite $\HH{X_0|X_{\tmix{\varepsilon}}}$ as
\begin{align}
		\HH{X_0|X_{\tmix{\varepsilon}}} & = - \sum_{y \in V} p_{t}(y) \sum_{x \in V} p_{0}(x|y) \log p_{0}(x|y) \label{eq:lem5p1} \\
		                             & = - \sum_{y \in V} \sum_{x \in V} p_{t}(y|x) p_{0}(x) \log p_{0}(x) 
																 - \sum_{y \in V} \sum_{x \in V} p_{t}(y|x) p_{0}(x) \log \frac{p_{t}(y|x)}{p_{t}(y)}~. \nonumber
\end{align}
	
The definition and properties of mixing time implies that there exist $\{\varepsilon_{y}^{(1)}\}_{y \in V}$ and
$\{\varepsilon_{y}^{(2)}\}_{y \in V}$ such that $\sum_{y \in V} \varepsilon_{y}^{(i)} \leq 2\varepsilon$ for $i \in \{1,2\}$ and
$\pi(y) - \varepsilon_{y}^{(1)} \leq p_{t}(y|x) \leq \pi(y) + \varepsilon_{y}^{(1)}$ and $\pi(y) - \varepsilon_{y}^{(2)}
\leq p_{t}(y) \leq \pi(y) + \varepsilon_{y}^{(2)}$. Let $\varepsilon_{y} = \max\{\varepsilon_{y}^{(1)}, \varepsilon_{y}^{(2)}\}$.
Because for any $y \in V$ $\pi(y) \geq 1/n^{2}$, letting $\varepsilon$ being arbitrary
$\varepsilon(n) = \SmallO{\min\{\frac{1}{n^2}, \HH{X_0}\}}$ will ensure that
$$
	\frac{p_{t}(y|x)}{p_{t}(y)} \leq \frac{\pi(y) + \varepsilon_{y}}{\pi(y) - \varepsilon_{y}} = 1 + \SmallO{\min\{1, \HH{X_0}\}}~. 
$$
Thus, the above relations allow us to find the lower bound on the conditional entropy $\HH{X_0|X_{\tmix{\varepsilon}}}$.
The first sum in \eqref{eq:lem5p1} gives us
\begin{align}
	- \sum_{y \in V} \sum_{x \in V} p_{t}(y|x) p_{0}(x) \log p_{0}(x) 
	          & \geq \sum_{y \in V} (\pi(y) - \varepsilon_{y}) \HH{X_0} \geq \HH{X_0}(1 - 4\varepsilon) \nonumber \\
						& = \HH{X_0}(1 - \SmallO{1})~, \label{eq:lem5p2}
\end{align}
whereas the second sum can be expressed as
\begin{align*}
		- \sum_{y \in V} \sum_{x \in V} p_{t}(y|x) p_{0}(x) \log \frac{p_{t}(y|x)}{p_{t}(y)}
		   & = - \sum_{x \in V} p_{0}(x) \sum_{y \in V} p_{t}(y|x) \log \frac{p_{t}(y|x)}{p_{t}(y)} \\
			 & = - \sum_{x \in V} p_{0}(x) \DD{p_{t}(y|x)}{p_{t}(y)}~. 
\end{align*}
Applying the upper bound on the relative entropy from Fact \ref{thm:RelEnt2} we get
\begin{align}
  \sum_{x \in V} p_{0}(x) \DD{p_{t}(y|x)}{p_{t}(y)}
	    & \leq \sum_{x \in V} p_{0}(x) \frac{1}{\ln 2} \left(\sum_{y \in V} \frac{(p_{t}(y|x))^{2}}{p_{t}(y)} - 1 \right) \label{eq:lem5p3}\\
	    & \leq \frac{1}{\ln 2} \sum_{x \in V} p_{0}(x) \sum_{y \in V} \left( \frac{(\pi(y) + \varepsilon_{y})^{2}}{\pi(y) - \varepsilon{y}} 
			       -  \pi(y) \right)  = \SmallO{\HH{X_0}}~. \nonumber
\end{align}
Combining the estimations \eqref{eq:lem5p2} and \eqref{eq:lem5p3} we obtain
$$
  \HH{X_0|X_{\tmix{\varepsilon}}} \geq \HH{X_0}(1 - \SmallO{1}) - \SmallO{\HH{X_0}}~,
$$
what results in
$$
	\frac{\HH{X_0|X_T}}{\HH{X_0}} \geq \frac{\HH{X_0}(1 - \SmallO{1}) - \SmallO{\HH{X_0}}}{\HH{X_0}} = 1 - \SmallO{1}~,
$$
as required.

We have set $q = \frac{f(n)}{\tmix{\varepsilon}}$ for arbitrary fixed $f(n) = \SmallO{1}$ and
$\varepsilon = \SmallO{\min\{\frac{1}{n^2}, \HH{X_0}\}}$. From Fact \ref{fact:MixingTime}
and Fact \ref{fact:RWMixing} we have $\tmix{\varepsilon} \leq n^{3} \log \varepsilon^{-1}$. Hence, there exists some
$g(n) = \SmallOmega{\max\{n^2, \frac{1}{\HH{X_0}}\}}$ dependent on $\varepsilon$ such that
$\tmix{\varepsilon} \leq n^{3} \log g(n)$ and
$$
  q = \frac{1}{h(n)} \cdot \frac{1}{n^{3} \log g(n)}~,
$$
where $h(n) = 1/f(n) = \SmallOmega{1}$.
\end{proof}

\begin{remark}
	As previously mentioned, the running time $T$ of the considered hiding algorithm follows geometric distribution with
	parameter $q$, hence the expected running time is $\E{T} = 1/q = h(n) \cdot n^{3} \log g(n)$, where $h(n)$ and $g(n)$
	are as in the proof of Theorem \ref{thm:unknowntopoMemorylessRnd}. If $H(X_0) = \BigOmega{\frac{1}{n^2}}$, as in the
	case of most distribution considered in practice, we can simply select $f(n)$ to be some function decreasing to 0
	arbitrary slowly and $\varepsilon$ such that $g(n) = c n^{3} \log{n}$ for some constant $c > 0$. In such cases
	the entropy of the distribution of agent's initial position has no impact on the upper bound on the algorithm's
	running time.
\end{remark}

It is clear that this algorithm works both for the scenario with single and many agents in the network
(in the latter the agents execute the algorithm independently using their own randomness).
The interesting question is whether it is possible to hide the initial state in the multi-agents 
case faster by taking advantage of performing simultaneously many random walks. As the speedup of multiple random walks
(in terms of cover time) in any graph remains a conjecture~\cite{AlonAKKLT11} we leave the speedup in terms of hiding
as a (possibly challenging) open question.

We conclude this section with a simple observation that the agent must have access to either memory, source of
randomness or the topology of the network in order to hide.
\begin{theorem}
	\label{thm:detNoMemo}
	In the model with unknown topology and with no memory there exists no well-hiding deterministic algorithm.
\end{theorem}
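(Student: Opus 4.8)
The plan is to fix an arbitrary deterministic memoryless algorithm $\AAA$ and exhibit a single graph together with a single initial distribution on which $\AAA$ provably fails to hide. The guiding observation is that a deterministic agent with no memory picks its move at a vertex as a fixed function of everything it can currently observe --- the degree of the current vertex, the port through which it entered, and (an estimate of) $n$ --- with no dependence on the vertices visited earlier. Hence on a graph that admits a fixed-point-free, port-preserving automorphism, two starting vertices related by that automorphism generate executions related by the same automorphism; in particular the final vertex is a nontrivial shift of the starting vertex, which makes the starting vertex recoverable from the final one.

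First I would set up the instance. For $n \ge 3$ let $G = C_n$ be the $n$-cycle on vertices $\{0,1,\dots,n-1\}$, port-labeled so that at every vertex $i$ port $1$ leads to $i+1$ and port $2$ leads to $i-1$ (indices mod $n$); then the rotation $\rho\colon i \mapsto i+1$ is a port-preserving automorphism of $G$ with no fixed point. Next I would argue that, because $\AAA$ is deterministic and its per-round action depends only on the local view (degree, entry port, $n$), all of which are invariant under $\rho$, the execution of $\AAA$ started at vertex $i$ is exactly the $\rho^{i}$-image of the execution started at vertex $0$. Consequently there are only two possibilities: either $\AAA$ never terminates from any start (so it is not a valid hiding algorithm on $G$, as it produces no final configuration), or it terminates after the same number of rounds $T$ from every start, ending at vertex $z+i \pmod n$ when started at $i$, where $z$ is the vertex at which it halts when started at $0$.

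Then I would finish with a short information-theoretic step, exactly as in Lemma~\ref{lem:known}. Take the two-point distribution $\Pr[X_0 = 0] = \Pr[X_0 = 1] = \tfrac12$, so that $\HH{X_0} = 1 > 0$. After $\AAA$ halts the agent is at $z$ if it started at $0$ and at $z+1 \neq z$ if it started at $1$, so $X_T$ determines $X_0$: $\HH{X_0 \mid X_T} = 0$, and therefore $\UC{X_0}{X_T} = \MutInf{X_0}{X_T}/\HH{X_0} = 1$. Since this holds for every $n \ge 3$, $\AAA$ cannot be $\epsilon$-hiding for any $\epsilon = \SmallO{1}$, i.e. it is not well hiding; as $\AAA$ was arbitrary, no deterministic memoryless algorithm is well hiding.

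I expect the only delicate point to be pinning down the formal notion of a ``deterministic, memoryless'' agent --- in particular whether it is allowed to read the current round number and/or the port by which it entered the current vertex --- and checking that the automorphism argument survives each such choice. It does, precisely because every quantity the agent is permitted to read is invariant under $\rho$ (the round counter is the same in both executions, and $\rho$ maps entry ports to equal entry ports). The non-terminating case needs nothing more than the remark that an algorithm that never yields $X_T$ cannot meet the hiding requirement.
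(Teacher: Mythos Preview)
Your argument is correct, but it is not the route the paper takes. The paper argues that a memoryless deterministic agent decides its move as a function of the current degree (and $n$) only; if it never moves it is trivially not well hiding, and if it ever leaves a degree-$d$ vertex through port $p$, one builds two stars whose centers have degree $d$, joined by a single edge carrying port label $p$ at both endpoints. On that graph the agent oscillates across the bridge forever, so the algorithm never terminates and hence cannot be well hiding.

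Your construction is different in spirit: instead of forcing non-termination, you let the algorithm run to completion on the homogeneously port-labeled cycle $C_n$ and use the port-preserving rotation $\rho$ to conclude that $X_T \equiv X_0 + z \pmod n$, whence $X_0$ is recoverable from $X_T$ and $\UC{X_0}{X_T}=1$. This is a clean symmetry/automorphism argument rather than a trap. What it buys you is robustness: your proof still goes through if the agent is additionally allowed to read the incoming port label and even a round counter, since both are $\rho$-invariant along paired executions; the paper's infinite-loop construction, by contrast, relies on the agent making the \emph{same} decision on every visit and would break if a round counter were visible. Conversely, the paper's proof is a couple of lines and needs no information-theoretic bookkeeping at the end. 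Both arguments share the same implicit modeling assumption (also made explicit in the paper's proof) that the agent does not see the vertex identifier; your remark about pinning down the exact local view is well placed.
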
	
\begin{proof}
	Take any hiding algorithm. If this algorithm never makes any move it obviously is not well-hiding. Otherwise observe
	that in the model without memory the move is decided only based on local observations (degree of the node) and some
	global information (value of $n$), hence every time the agent visits a node it will make the same decision. Assume
	that the agent decides to move from a node of degree $d$ via port $p$. We construct a graph from two stars with degree
	$d$ joined by an edge $e$ with port $p$ on both endpoints. Since the agent has no memory and no randomness it will end
	up in an infinite loop traversing edge $e$. Hence this algorithm cannot be regarded as well-hiding since it never
	terminates.
\end{proof}

\subsection{Unlimited Memory}
In this section we assume that the agent is endowed with unlimited memory that remains intact when the agent traverses
an edge. We first observe that a standard search algorithm (e.g., DFS) can be carried out in such a model. 

\begin{theorem}
	\label{thm:DFS}
	There exists a perfectly-hiding algorithm in the model with unlimited memory that needs $O(m)$ steps in any graph.
\end{theorem}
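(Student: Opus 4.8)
The plan is to exhibit an explicit algorithm based on depth-first search (DFS) that uses the agent's unlimited memory, and to argue it is perfectly hiding with $O(m)$ running time. Since the graph is port-labeled and the vertices carry unique labels in $\{1,\dots,n\}$, the agent can run a standard DFS from its starting vertex, maintaining in memory the set of visited vertices (together with their labels) and the DFS stack of edges used to enter each vertex. A classical DFS traversal of a connected graph crosses each edge at most twice (once in each direction), so after at most $2m$ steps the agent has seen every vertex and in particular knows the label of the vertex $v^{*}$ with minimum label. The agent then walks back along its recorded stack to the root and from there follows a stored shortest path (or simply any recorded path in the spanning tree built by the DFS) to $v^{*}$. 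All of this bookkeeping is feasible because the memory is unlimited and persists across edge traversals.

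Next I would bound the number of steps. The DFS exploration phase costs at most $2m$ moves. The return-to-root phase costs at most the depth of the DFS tree, which is $O(n) = O(m)$ for a connected graph; and the final walk from the root to $v^{*}$ along tree edges costs at most another $O(n)=O(m)$ moves. Hence the total is $O(m)$, as claimed. (If one wishes to be slightly more careful, the DFS can simply be continued so that it finishes back at the root, costing exactly $\le 2m$ moves, and then one appends the $O(n)$-step walk to $v^{*}$.)

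Then I would verify the perfect-hiding property. By construction, regardless of which vertex the agent started in, the algorithm terminates with the agent located at the fixed vertex $v^{*}$ — the globally minimum-labeled vertex — which depends only on $G$ and not at all on $X_0$. Therefore the final configuration $X_T$ is a constant (given $G$), so $X_0$ and $X_T$ are stochastically independent: $\MutInf{X_0}{X_T} = 0$, and consequently $\UC{X_0}{X_T} = 0$ for every distribution of $X_0$ with $\HH{X_0}>0$. This is exactly the requirement of Definition~\ref{def:hiding} with $\epsilon = 0$, i.e. the algorithm is perfectly hiding in the sense of Definition~\ref{def:wellHiding}. Note this holds for any graph $G$, as required.

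The main obstacle — and it is a mild one — is making precise that a memory-equipped agent can actually simulate DFS in the port-labeled model: it must recognize already-visited vertices (possible via the unique vertex labels), and it must be able to backtrack along the exact edge it used to enter a vertex (possible by storing, for each vertex on the current stack, the incoming port so the agent knows which port to leave through when popping). Once this implementation detail is spelled out, the step count and the independence argument are routine. I would therefore devote most of the write-up to describing the DFS simulation and the data the agent keeps in memory, and only briefly state the $O(m)$ bound and the $\MutInf{X_0}{X_T}=0$ conclusion.
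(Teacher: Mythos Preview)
Your proposal is correct and follows exactly the same approach as the paper: perform a DFS of the graph (feasible because the agent has memory and vertices are labeled) and then move to the node with minimum ID, yielding a final position independent of $X_0$ in $O(m)$ steps. The paper's proof is in fact a two-sentence sketch of precisely this idea, so your write-up simply fills in the implementation details the paper leaves implicit.
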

\begin{proof}
	The algorithm works as follows: it performs a DFS search of the graph (it is possible since the agent has memory) and
	then moves to the node with minimum ID.
\end{proof} 
Now we would like to show that $\Omega(m)$ steps are necessary for any well-hiding algorithm in this model. We will
construct a particular family of graphs such that for any well-hiding algorithm and any $n$ and $m$ we can find a graph
with $n$ nodes and $m$ edges in this family such that this algorithm will need on average $\Omega(m)$ steps.
\begin{theorem}
	\label{thm:unknownMemory}
	For a single agent and unknown network topology, for any $n$ and $m$ and any well hiding algorithm $\AAA$ there exists
	a port labeled graph $G$ with $n$ vertices and $m$ edges representing the underlying network and a distribution $\LLL$
	of agent's initial position on which the agent needs to perform the expected number of
	$\Omega(m)$ steps, where $m$ is the number of edges of $G$.
\end{theorem}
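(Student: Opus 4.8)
The plan is to exhibit, for every $n$ and $m$, a hard family of port-labeled graphs on $n$ vertices and $m$ edges together with an initial distribution $\LLL$ forcing any well-hiding algorithm to use $\Omega(m)$ expected steps, and then to invoke Lemma~\ref{lem:known} to conclude. The natural construction is to take a ``long'' backbone — a path (or cycle) — and hang the remaining edges on it as a dense ``blob'' at one end: concretely, a clique or near-clique on roughly $\sqrt{m}$ vertices carrying $\Theta(m)$ of the edges, attached to the end of a path whose length is chosen so that the total vertex count is exactly $n$. The key point is that a memoryless-free (i.e. unlimited-memory) agent still cannot teleport: to reach a fixed target vertex it must physically traverse edges, and if the target sits at the far end of the construction from where the agent might have started, the walk is long. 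So I would plant two candidate starting vertices $u$ and $v$ that are, in an information-theoretic sense, ``symmetric'' under the port labeling but far apart in the graph, and put $\Pr[X_0=u]=\Pr[X_0=v]=1/2$, exactly mirroring the two-point distribution used in Lemma~\ref{lem:known} and Theorem~\ref{thm:known}.

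First I would make the reduction to Lemma~\ref{lem:known} precise: it suffices to show that there is a time horizon $t = \Theta(m)$ and a constant $\gamma>0$ such that, with probability at least $1/2+\gamma$, the set of vertices an algorithm can reach in $t$ steps from $u$ is disjoint from the set it can reach from $v$. The cleanest way to get disjointness is a local-indistinguishability / adversary argument: build the graph so that the radius-$r$ neighborhoods of $u$ and $v$ are isomorphic as port-labeled rooted graphs for $r$ up to $\Omega(m)$ (e.g. two disjoint copies of the same dense gadget joined by a long path, with $u$, $v$ the ``centers'' of the two copies and the port labels on the path arranged so the two directions look identical locally). Until the agent has walked far enough to ``see'' an asymmetry, its entire trajectory — and hence its terminal vertex — is a deterministic (or, in the randomized case, identically distributed) function of what it has observed, which is the same starting from $u$ or from $v$. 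So if it terminates before traversing the backbone, its terminal position, reachable from $u$, is matched by an equiprobable terminal position reachable from $v$; forcing the reachable sets to actually be disjoint requires just a bit more care — one can route the two local explorations into physically separate parts of the graph, or appeal to a counting argument that $t$ steps visit at most $t+1$ vertices so that for $t < m/4$, say, the visited set from $u$ lies in one half of the graph and from $v$ in the other.

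The main obstacle — and the step I would spend the most effort on — is the simultaneous constraint on $n$ and $m$. For sparse $m = \Theta(n)$ a long path already does the job, and for dense $m = \Theta(n^2)$ we need the blob to absorb $\Theta(m)$ edges while still leaving a backbone long enough that crossing it costs $\Omega(m)$; these pull in opposite directions, so I expect the construction to split into regimes (roughly: $m \le cn$ versus $m > cn$) with a different balance of backbone-length versus blob-size in each, and the bookkeeping to hit $n$ vertices and $m$ edges on the nose (adjusting with a few pendant edges or by tuning the clique size and adding a matching of leftover edges) will be the fiddly part. I would also need to confirm that in the dense regime the backbone can genuinely be made $\Omega(m)$ long while the blob holds the edges — which works because a clique on $k$ vertices has $\binom{k}{2}$ edges, so choosing $k \asymp \sqrt{m}$ leaves $n - \Theta(\sqrt m)$ vertices free for the path, and $m - \binom{k}{2}$ leftover edges can be folded into the path as extra pendant structure without shortening it. Once the graph family is pinned down, the disjointness-with-constant-probability claim and the appeal to Lemma~\ref{lem:known} are routine, and the energy complexity of $\Omega(m)$ (equal asymptotically to the running time, as remarked in the model section) follows immediately for the single agent.
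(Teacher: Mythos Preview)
Your proposal has a genuine gap in the dense regime $m = \omega(n)$, and it is precisely the step you flag as ``the main obstacle.'' You claim that ``in the dense regime the backbone can genuinely be made $\Omega(m)$ long,'' but this is impossible: the backbone is a path on at most $n$ vertices, so its length is at most $n-1 = o(m)$. Concretely, with a clique on $k \asymp \sqrt{m}$ vertices absorbing the edges, the path has $n - \Theta(\sqrt{m})$ vertices; when $m = \Theta(n^2)$ this leaves only $O(1)$ vertices for the backbone. Your local-indistinguishability argument (isomorphic port-labeled $r$-balls around $u$ and $v$) can therefore only yield $r = O(n)$, and hence a lower bound of $O(n)$ steps, not $\Omega(m)$. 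The same limitation applies to the ``two copies of a dense gadget joined by a path'' variant: once the agent crosses the path --- which costs $O(n)$ steps --- the reachable sets need no longer be disjoint.

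The paper's construction sidesteps this by exploiting the \emph{unknown port labeling} rather than graph distance. It builds a chain of $y = \Theta(n^2/m)$ cliques each of size $x = \Theta(m/n)$, with adjacent cliques joined by a single bridge edge whose endpoints (bridgeheads) are subdivisions of clique edges. The graph is drawn uniformly from all such chains, so the agent does not know which clique edge hides a bridgehead; finding it requires probing $\Omega(x^2)$ distinct edges per clique with constant probability. A Chernoff bound over the $y$ independent cliques then gives that reaching the middle of the chain from either end costs $\Omega(y \cdot x^2) = \Omega(m)$ steps with probability at least $3/4$, and a fixed graph $G^*$ in the family achieves this for both endpoints simultaneously. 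Lemma~\ref{lem:known} then applies with $u$ and $v$ in the two terminal cliques. The essential idea you are missing is that in the unknown-topology model the lower bound need not come from diameter at all: it comes from the \emph{search cost} of locating a hidden exit inside a dense region, and chaining such regions multiplies the cost.
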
 
\begin{proof}
	If $m = O(n)$ we can construct a graph in which $D = \Omega(m)$ and use Theorem~\ref{thm:known}. Now assume that
	$m = \omega(n)$ and consider a graph constructed by connecting a chain of $y$ cliques of size $x$. If $m=\omega(n)$
	we can find such $x,y$ that $x = \Theta(m/n)$ and $y = \Theta(n^2/m)$.

The adjacent cliques are connected by adding an additional vertex on two edges (one from each clique) and connecting
these vertices by an additional edge. We call this edge by \emph{bridge} and the vertices adjacent to a bridge will be
called \emph{bridgeheads}. Call by $\mathcal{G}_{x,y}$ the family of all such chains of cliques on $n$ nodes and
$m$ vertices (note that we take only such chains in which an edge contains at most one bridgehead). We want to calculate
the expected time of algorithm $\AAA$ to reach to the middle of the chain (if $y$ is even then the middle is the middle
bridge, otherwise it is the middle clique) on a graph chosen uniformly at random from family $\mathcal{G}_{x,y}$.
	
		
Clearly when the agent is traversing edges of the clique, each edge can contain a bridgehead with probability
$\frac{1}{\binom{x}{2}-1}$ hence with probability at least $1/2$ the agent needs to traverse $\frac{\binom{x}{2}-1}{2}$
different edges. Bridgeheads are chosen independently in each clique hence we can choose the constants so that by the
Chernoff Bound we have that with probability at least $\frac34$ the time to reach the middle of the chain is
$\BigOmega{y \cdot x^2} = \BigOmega{m}$ if $G$ is chosen uniformly at random from $\mathcal{G}_{x,y}$. By symmetry, this
holds for both endpoints (reaching middle from the first clique or the $y$-th one). Hence there exists
$G^* \in \mathcal{G}_{x,y}$ such that the time of $\AAA$ to reach the middle from both endpoints on $G^*$ is at least
$c \cdot m$, for some constant $c>0$ with probability at least $\frac34$. 
Now by Lemma~\ref{lem:known} such algorithm is not well-hiding on graph $G^*$ if the number of steps is at most
$c\cdot m$.
		
\end{proof}


%
%
%
%
%

\section{Conclusions and Further Research}\label{sect:conclusion}
We introduced and motivated the problem of location hiding. We also discussed efficient algorithms and lower bounds for
some model settings. Nevertheless we believe that some interesting questions are left unanswered. 

First, we plan to deeper understand the relation of location hiding problem with classic, fundamental problems
like rendez-vous or patrolling. 

Note that we assumed the energy complexity is the maximal energetic expenditure over all agents. In this case 
both complexities (time and energy) are equivalent except some artificial examples. 
In some cases however it would be more adequate to consider a total energy used by all stations. 
This leads to substantially different analysis that at least in some cases seems to use more elaborated techniques. 

Another line of research is the model with dynamic topology that may change during the execution of the protocol.
Similarly, we believe that it would be interesting to investigate the model with the weaker adversary that is given only 
partial knowledge of the graph topology and the actual assignment of agents. On the other hand, one may consider 
more powerful adversaries being able to observe some chosen part of the network for a given period of time.
It would be also worth considering how high level of security can be achieved if each agent is able to
perform only $\BigO{1}$ steps.

Motivated by the fact that the mobile devices are actually similar objects, we considered the setting where they are
indistinguishable. It would be useful to study the case when the adversary
can distinguish between different agents.

Finally, we also suppose that it would be also interesting for applications to construct more efficient protocols for
given classes of graphs with some common characteristic (e.g., lines, trees) and algorithms desired for restricted
distributions of $X_0$. For example it is clear that if we know that the initial assignment of agents is uniform
we can design more efficient (in expectation) and realistic algorithms.

\appendix
\section{Appendix - Information Theory}\label{sect:apend}

Let us recall some basic definitions and facts from Information Theory that can be found e.g. in \cite{Thomas,Lukasz}. 
Let us mention that in all cases below $\log$ will always denote the base-2 logarithm and as a consequence all the measures are 
expressed in \textit{bits}. 

\begin{definition}[Entropy]
	For  a discrete random variable $X \colon \XXX \rightarrow \RR$ we define entropy as
	$$
		\HH{X} = -\sum_{x \in \XXX} \Pr[X=x] \log \Pr[X=x]~.
	$$ 
\end{definition}

\begin{definition}[Conditional Entropy]
	If $X$ and $Y$ are two discrete random variables defined on the spaces $\XXX$ and $\YYY$, respectively,
	the conditional entropy is defined as
	$$
		\HH{X|Y} = -\sum_{y \in \YYY} \Pr[Y=y] \sum_{x \in \XXX} \Pr[X=x|Y=y] \log \Pr[X=x|Y=y]~.
	$$
\end{definition}

The following simple fact states that conditioning on the other random variable cannot increase the entropy.
\begin{fact}
	\label{fact:condReduce}
	For any random variables $X$ and $Y$
	$$
		\HH{X|Y} \leq \HH{X}
	$$
	and the equality holds if and only if $X$ and $Y$ are independent.
\end{fact}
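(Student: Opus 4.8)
The plan is to deduce the statement from the nonnegativity of mutual information, which is the standard route. First I would fix notation $p(x) = \Pr[X=x]$, $p(y) = \Pr[Y=y]$, $p(x,y) = \Pr[X=x,Y=y]$ and, using the marginalization identities $\HH{X} = -\sum_{x,y} p(x,y)\log p(x)$ and $\HH{X|Y} = -\sum_{x,y} p(x,y)\log\frac{p(x,y)}{p(y)}$, rearrange the difference into the single expression
$$
  \HH{X} - \HH{X|Y} \;=\; \sum_{x,y} p(x,y)\,\log\frac{p(x,y)}{p(x)p(y)} \;=\; \MutInf{X}{Y} \;=\; \DD{p(x,y)}{p(x)p(y)},
$$
where the sum runs over the pairs $(x,y)$ in the support of the joint law (terms with $p(x,y)=0$ being taken as $0$ by the usual convention). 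So it suffices to show this quantity is $\geq 0$ and to characterise when it is $0$.

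For nonnegativity I would apply Jensen's inequality to the strictly concave function $\log$: writing the expression as $-\sum_{x,y} p(x,y)\log\frac{p(x)p(y)}{p(x,y)}$ and pushing the expectation inside,
$$
  -\sum_{x,y} p(x,y)\log\frac{p(x)p(y)}{p(x,y)} \;\geq\; -\log\!\left(\sum_{x,y} p(x,y)\cdot\frac{p(x)p(y)}{p(x,y)}\right) \;=\; -\log\!\left(\sum_{x,y} p(x)p(y)\right) \;\geq\; -\log 1 \;=\; 0,
$$
the last inequality because $\sum_{x,y} p(x)p(y) \leq 1$ (summing only over the joint support). Combining with the identity above gives $\HH{X|Y} \leq \HH{X}$.

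For the equality case I would use strict concavity of $\log$: the Jensen step is tight precisely when $\frac{p(x)p(y)}{p(x,y)}$ is constant over the support, and together with the normalization $\sum p(x,y)=1$ this forces $p(x,y) = p(x)p(y)$ for all $x,y$, i.e. $X$ and $Y$ independent; conversely, independence makes every term of the displayed identity vanish. I do not expect a genuine obstacle here — the one place to be slightly careful is the support bookkeeping (ensuring the product terms $p(x)p(y)$ outside the joint support are handled consistently and that the second inequality really becomes equality exactly when the joint support is a full product set), but this is routine and the fact is entirely classical.
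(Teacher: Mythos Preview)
Your argument is correct and is the standard derivation of the inequality via the nonnegativity of mutual information (equivalently, of relative entropy) established through Jensen's inequality. The paper itself does not supply a proof of this fact: it is stated in the appendix as a classical result from information theory with a pointer to the references~\cite{Thomas,Lukasz}, so there is nothing to compare against beyond noting that your route is exactly the textbook one found in those sources. Your handling of the equality case is also fine; the support bookkeeping you flag is indeed routine, and the conclusion that equality forces $p(x,y)=p(x)p(y)$ on the full product of the marginal supports is the right endpoint.
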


\begin{definition}[Relative Entropy]
	\label{def:relEnt}
	Let $X$ and $Y$ be two discrete random variables defined on the common probability space $\XXX$ with probability
	mass functions $p(x)$ and $q(x)$, respectively. We define the relative entropy (also known as Kullback-Leibler distance)
	between $p(x)$ and $q(x)$ as
	$$
		\DD{p}{q} = \sum_{x \in \XXX} p(x) \log \frac{p(x)}{q(x)}
	$$
	with the convention that $0 \log \frac{0}{0} = 0$, $0 \log \frac{0}{q} = 0$ and $0 \log \frac{p}{0} = \infty$.
\end{definition}

\begin{fact}[Information inequality]
	\label{fact:RelEnt1}
  Let $p(x)$ and $q(x)$ be probability mass functions of two discrete random variables
	$X, Y \colon \XXX \rightarrow \RR$. Then
	$$
		\DD{p}{q} \geq 0
	$$
	with equality if and only if $\forall x \in \XXX$ $p(x) = q(x)$.
\end{fact}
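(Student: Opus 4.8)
The plan is to prove the inequality by restricting the sum to the support of $p$ and applying the elementary bound $\ln t \le t-1$, valid for all $t>0$ with equality exactly at $t=1$. First I would put $A=\{x\in\XXX:p(x)>0\}$; by the conventions in Definition~\ref{def:relEnt} the terms with $p(x)=0$ contribute nothing, so $\DD{p}{q}=\sum_{x\in A}p(x)\log\frac{p(x)}{q(x)}$. If some $x\in A$ has $q(x)=0$, then $\DD{p}{q}=\infty\ge 0$ and there is nothing to prove, so I may assume $q(x)>0$ for every $x\in A$.

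Next I would pass to natural logarithms and estimate
$$-\DD{p}{q}=\frac{1}{\ln 2}\sum_{x\in A}p(x)\ln\frac{q(x)}{p(x)}\le\frac{1}{\ln 2}\sum_{x\in A}p(x)\left(\frac{q(x)}{p(x)}-1\right)=\frac{1}{\ln 2}\left(\sum_{x\in A}q(x)-\sum_{x\in A}p(x)\right).$$
Since $\sum_{x\in A}p(x)=1$ and $\sum_{x\in A}q(x)\le 1$, the right-hand side is at most $0$, which yields $\DD{p}{q}\ge 0$. (The same step can be phrased as Jensen's inequality applied to the concave function $\log$, but the $\ln t\le t-1$ route keeps the equality analysis cleanest.)

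For the equality case I would observe that $\DD{p}{q}=0$ forces equality in every inequality used above. Term-by-term equality in $\ln t\le t-1$ forces $q(x)/p(x)=1$, i.e.\ $q(x)=p(x)$, for all $x\in A$; then $\sum_{x\in A}q(x)=\sum_{x\in A}p(x)=1$, so $q$ also vanishes off $A$, and hence $p(x)=q(x)$ for every $x\in\XXX$. The converse is immediate. I do not expect a real obstacle here; the only point needing care is the bookkeeping of the zero-probability conventions, so that both the initial restriction to $A$ and the final ``$q$ vanishes outside $A$'' step are fully justified.
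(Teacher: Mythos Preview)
Your argument is correct and is precisely the standard textbook proof of the information inequality (via $\ln t \le t-1$, equivalently Jensen). The paper, however, does not prove this statement at all: it is listed in the appendix as a recalled \emph{Fact} from the cited references (e.g.\ Cover--Thomas), so there is no paper proof to compare against.
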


The following fact gives an upper bound on relative entropy $\DD{p}{q}$. The proof can be found in \cite{DRAGOMIRrelEnt}.
\begin{fact}[Theorem 1 in \cite{DRAGOMIRrelEnt}]
	\label{thm:RelEnt2}
	Let $p(x)$, $q(x) > 0$ be probability mass functions of two discrete random variables $X$ and $Y$, respectively,
	defined on the space $\XXX$. Then
	$$
	  \DD{p}{q} \leq \frac{1}{\ln 2} \left( \sum_{x \in \XXX} \frac{p^{2}(x)}{q(x)} - 1 \right)~.
	$$

\end{fact}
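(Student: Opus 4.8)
The plan is to reduce the base-$2$ relative entropy to its natural-logarithm form and then invoke the elementary tangent-line inequality $\ln t \le t - 1$, valid for every $t > 0$, applied with $t = p(x)/q(x)$. Since $\log u = \ln u / \ln 2$, I would first rewrite
$$
  \DD{p}{q} = \frac{1}{\ln 2} \sum_{x \in \XXX} p(x) \ln \frac{p(x)}{q(x)}~,
$$
so that it suffices to bound the natural-log sum on the right by $\sum_{x} p^{2}(x)/q(x) - 1$. Note that the argument of each logarithm is well defined because, by hypothesis, $p(x), q(x) > 0$ for all $x \in \XXX$.

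Next I would fix an arbitrary $x \in \XXX$ and apply the bound $\ln \frac{p(x)}{q(x)} \le \frac{p(x)}{q(x)} - 1$. This is the only analytic input to the whole argument. Multiplying both sides by the nonnegative weight $p(x)$ yields the pointwise estimate
$$
  p(x) \ln \frac{p(x)}{q(x)} \le \frac{p^{2}(x)}{q(x)} - p(x)~.
$$

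Summing this over all $x \in \XXX$ and using the normalization $\sum_{x \in \XXX} p(x) = 1$ gives $\sum_{x} p(x) \ln \frac{p(x)}{q(x)} \le \sum_{x} \frac{p^{2}(x)}{q(x)} - 1$. Carrying the factor $1/\ln 2$ through from the first step then produces exactly the claimed inequality
$$
  \DD{p}{q} \le \frac{1}{\ln 2} \left( \sum_{x \in \XXX} \frac{p^{2}(x)}{q(x)} - 1 \right)~.
$$

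There is essentially no hard step here: the entire proof is the tangent-line bound $\ln t \le t - 1$ applied termwise, and the only points requiring care are the base-conversion constant $1/\ln 2$ and the fact that $p(x)$ is nonnegative so that the inequality direction is preserved upon multiplication. For completeness one may observe that equality holds precisely when $p(x)/q(x) = 1$ for every $x$, i.e. when $p = q$, since $\ln t = t - 1$ only at $t = 1$; this is consistent with the equality case of the information inequality recorded in Fact~\ref{fact:RelEnt1}.
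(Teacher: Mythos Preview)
Your proof is correct. The paper does not actually supply its own proof of this fact but merely cites \cite{DRAGOMIRrelEnt}; your argument via the tangent-line inequality $\ln t \le t-1$ is the standard elementary derivation and is precisely the one given in that reference.
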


\begin{definition}[Mutual Information]\label{eq:MutInf1}
If $X$ and $Y$ are two discrete random variables defined on the spaces $\XXX$ and $\YYY$, respectively,
then the mutual information of $X$ and $Y$ is defined as
\begin{equation}
	  \MutInf{X}{Y} = \sum_{x \in \XXX} \sum_{y \in \YYY} \Pr[X=x, Y=y] \log\left(\frac{\Pr[X=x, Y=y]}{\Pr[X=x] \Pr[Y=y]}\right)~.
\end{equation}
\end{definition}

\begin{fact}
\label{eq:MutInf2}
 For any discrete random variables $X,Y$ 
\begin{itemize}
	\item $0 \leq \MutInf{X}{Y} \leq \min\{\HH{X}, \HH{Y}\}$ and the first equality holds 
				if and only if  random variables $X$ and $Y$ are independent,
	\item $\MutInf{X}{Y} = \MutInf{Y}{X} = \HH{X} - \HH{X|Y} = \HH{Y} - \HH{Y|X}$.
	\item $\MutInf{X}{Y} = \DD{p(x,y)}{p(x)p(y)}$ where $p(x,y)$ denotes the joint distribution, and $p(x)p(y)$
	      the product distribution of $X$ and $Y$.
\end{itemize}
\end{fact}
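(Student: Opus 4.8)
The plan is to establish the three bullets in a cascading order, deriving each from the previous together with the facts already recorded in the appendix. I would begin with the third bullet, the relative-entropy representation, since it is nothing more than a rewriting of the definition: writing $p(x,y) = \Pr[X=x,Y=y]$ for the joint law and $p(x)p(y)$ for the product of the marginals, the summand in Definition~\ref{eq:MutInf1} is exactly $p(x,y)\log\frac{p(x,y)}{p(x)p(y)}$, which is precisely the summand defining $\DD{p(x,y)}{p(x)p(y)}$ in Definition~\ref{def:relEnt}. Hence $\MutInf{X}{Y} = \DD{p(x,y)}{p(x)p(y)}$ holds termwise. With this identity in hand, the lower bound $\MutInf{X}{Y} \geq 0$ and its equality case are immediate from the information inequality (Fact~\ref{fact:RelEnt1}): relative entropy is nonnegative and vanishes precisely when the two distributions coincide, i.e.\ when $p(x,y) = p(x)p(y)$ for all $x,y$, which is exactly the statement that $X$ and $Y$ are independent.

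Next I would prove the chain of identities in the second bullet by a direct expansion. Using $\log\frac{p(x,y)}{p(x)p(y)} = \log\frac{\Pr[X=x|Y=y]}{\Pr[X=x]}$ and splitting the logarithm, the double sum defining $\MutInf{X}{Y}$ separates into two pieces. In the term carrying $\log\Pr[X=x]$ I would sum out $y$ first, collapsing $\sum_y \Pr[X=x,Y=y]$ to the marginal $\Pr[X=x]$ and recovering $-\HH{X}$ with a sign. In the term carrying $\log\Pr[X=x|Y=y]$ I would group by $y$ and recognize the inner sum as the conditional entropy, yielding $-\HH{X|Y}$. Combining the two gives $\MutInf{X}{Y} = \HH{X} - \HH{X|Y}$. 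The symmetry $\MutInf{X}{Y} = \MutInf{Y}{X}$ follows because the defining double sum is invariant under exchanging the roles of $X$ and $Y$; applying the same expansion with the roles swapped then produces $\MutInf{Y}{X} = \HH{Y} - \HH{Y|X}$, so all four quantities coincide.

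Finally, the upper bound in the first bullet drops out of the identities just proved: since conditional entropy is a sum of nonnegative terms we have $\HH{X|Y} \geq 0$ (alternatively this is immediate from Fact~\ref{fact:condReduce}), whence $\MutInf{X}{Y} = \HH{X} - \HH{X|Y} \leq \HH{X}$, and symmetrically $\MutInf{X}{Y} \leq \HH{Y}$, giving $\MutInf{X}{Y} \leq \min\{\HH{X},\HH{Y}\}$. I do not expect any genuinely hard step here, as the statement is a standard bundling of textbook facts; the only points requiring care are bookkeeping ones, namely handling the degenerate terms where some $p(x,y)$, $p(x)$ or $p(y)$ is zero so that the logarithm manipulations remain well defined under the conventions fixed in Definition~\ref{def:relEnt}, and spelling out the equality case of the information inequality cleanly enough to conclude the ``if and only if'' for independence. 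These are the places I would be most careful, but each is routine.
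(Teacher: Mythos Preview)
Your proof is correct and entirely standard. Note, however, that the paper does not actually prove this statement: it is listed as a \emph{Fact} in the appendix, with the surrounding text explicitly saying that these are basic results ``that can be found e.g.\ in \cite{Thomas,Lukasz}''. So there is no paper proof to compare against; you have simply supplied the textbook derivation that the authors chose to omit.

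One small slip worth fixing: the parenthetical ``alternatively this is immediate from Fact~\ref{fact:condReduce}'' is not right, since Fact~\ref{fact:condReduce} gives $\HH{X|Y} \leq \HH{X}$, not $\HH{X|Y} \geq 0$. Your primary justification (conditional entropy is a nonnegative combination of entropies of conditional distributions, each of which is nonnegative) is the correct one; just drop the misplaced cross-reference.
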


\section{Appendix - Markov Chains}
\label{sect:apendMC}

In this appendix we recall some fundamental definitions and facts from the theory of Markov processes and random walks,
including some basic properties
of Markov chains related to information theory. They can be found e.g. in \cite{LevinPeresWilmerMCMT,lovasz1993random,Thomas}. 

Unless otherwise stated, we will consider only time-homogeneous chains, i.e. such chains for which the transition
probabilities between states do not change with time and therefore can be described by some stochastic matrix $P$.

\begin{definition}[Total variation distance]
	For two probability distributions $\mu$ and $\nu$ on the space $\XXX$ we define the total variation distance
	between $\mu$ and $\nu$ as
	$$
		\dTV{\mu}{\nu} = \max_{A \subseteq \XXX} |\mu(A) - \nu(A)|~.
	$$
\end{definition}

\begin{fact}
  Let $\mu$ and $\nu$ be probability distribution on $\XXX$. Then
	$$
		\dTV{\mu}{\nu} = \frac{1}{2} \sum_{x \in \XXX} |\mu(x) - \nu(x)|~.
	$$
\end{fact}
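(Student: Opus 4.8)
The plan is to evaluate the maximum in the definition $\dTV{\mu}{\nu} = \max_{A \subseteq \XXX} |\mu(A) - \nu(A)|$ by exhibiting the optimizing set explicitly. First I would define $B = \{x \in \XXX : \mu(x) \geq \nu(x)\}$, the set of points where $\mu$ dominates $\nu$. For any $A \subseteq \XXX$ we have $\mu(A) - \nu(A) = \sum_{x \in A}(\mu(x) - \nu(x))$, and this sum can only increase by including points where the per-point difference $\mu(x) - \nu(x)$ is positive and excluding those where it is negative. Hence $\mu(A) - \nu(A)$ is maximized exactly at $A = B$ and, symmetrically, minimized at $A = B^c = \XXX \setminus B$. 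Consequently the maximum of the absolute value $|\mu(A) - \nu(A)|$ over all $A$ is attained at $B$ or at $B^c$.

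The key algebraic step uses that $\mu$ and $\nu$ are both probability distributions, so $\sum_{x \in \XXX}(\mu(x) - \nu(x)) = 1 - 1 = 0$. Splitting this identity over $B$ and $B^c$ yields $\sum_{x \in B}(\mu(x) - \nu(x)) = \sum_{x \in B^c}(\nu(x) - \mu(x))$, that is, $\mu(B) - \nu(B) = \nu(B^c) - \mu(B^c)$. Both of these quantities are nonnegative (on $B$ the difference $\mu - \nu$ is nonnegative, on $B^c$ the difference $\nu - \mu$ is nonnegative by the definition of $B$), so they agree in absolute value. This shows that both candidate sets give the same value and the maximum in the definition equals $\mu(B) - \nu(B)$.

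Finally I would connect this to the sum of absolute differences. Splitting $\sum_{x \in \XXX}|\mu(x) - \nu(x)|$ over $B$ and $B^c$, the contribution from $B$ equals $\mu(B) - \nu(B)$ and the contribution from $B^c$ equals $\nu(B^c) - \mu(B^c)$. By the previous paragraph these two contributions are equal, so the whole sum is exactly $2\bigl(\mu(B) - \nu(B)\bigr) = 2\,\dTV{\mu}{\nu}$, and dividing by two gives the claimed identity. I do not expect any serious obstacle; the only point requiring slight care is justifying that $B$ genuinely \emph{achieves} the maximum over all subsets rather than merely bounding it, and this follows from the monotonicity observation noted above, namely that adding a point to $A$ raises $\mu(A) - \nu(A)$ if and only if the per-point difference there is positive.
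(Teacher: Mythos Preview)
Your argument is correct and is in fact the standard proof (essentially the one in Levin--Peres--Wilmer, which the paper cites). Note, however, that the paper does not actually prove this statement: it is recorded in the appendix as a well-known fact without proof, so there is nothing to compare against beyond observing that your proposal matches the textbook derivation.
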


\begin{definition}
Let $P^{t}(x_0, \cdot)$ denote the distribution of an ergodic Markov chain $M$ on finite space $\XXX$ in step $t$
when starting in the state $x_0$. Let $\pi$ be the stationary distribution of $M$. We define 
$$
  d(t) = \max_{x \in \XXX} \dTV{P^{t}(x, \cdot)}{\pi}
$$
and
$$
	\bar{d}(t) = max_{x, y \in \XXX} \dTV{P^{t}(x, \cdot)}{P^{t}(y, \cdot)}~.
$$
\end{definition}

\begin{fact}
	\label{fact:dMC1}
	Denoting by $\PPP$ the family of all probability distributions on $\XXX$, the following relations for $d(t)$
	and $\bar{d}(t)$ hold.
	\begin{itemize}
		\item $d(t) \leq \bar{d}(t) \leq 2 d(t)$,
		\item $d(t) = \sup_{\mu \in \PPP} \dTV{\mu P^{t}}{\pi}$.
		\item $d(t) = \sup_{\mu, \nu \in \PPP} \dTV{\mu P^{t}}{\nu P^{t}}$.
	\end{itemize}
\end{fact}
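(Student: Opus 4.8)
The plan is to reduce all three relations to a single workhorse---the joint convexity of total variation distance---and then to invoke stationarity of $\pi$ together with the ordinary triangle inequality. So first I would record the key lemma: for any probability weights $(\lambda_i)$ and any distributions $\mu_i,\nu_i$ on $\XXX$,
$$
\dTV{\sum_i \lambda_i \mu_i}{\sum_i \lambda_i \nu_i} \;\le\; \sum_i \lambda_i\,\dTV{\mu_i}{\nu_i}~.
$$
This is immediate from the summation form $\dTV{\mu}{\nu}=\tfrac{1}{2}\sum_{z\in\XXX}|\mu(z)-\nu(z)|$ recalled just above, by applying $\bigl|\sum_i\lambda_i(\mu_i(z)-\nu_i(z))\bigr|\le\sum_i\lambda_i|\mu_i(z)-\nu_i(z)|$ termwise and summing over $z$. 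The special case in which all the $\nu_i$ coincide will be used repeatedly.

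For the first relation, the upper bound $\bar{d}(t)\le 2d(t)$ is just the triangle inequality routed through $\pi$: $\dTV{P^{t}(x,\cdot)}{P^{t}(y,\cdot)}\le\dTV{P^{t}(x,\cdot)}{\pi}+\dTV{\pi}{P^{t}(y,\cdot)}\le 2d(t)$, maximized over $x,y$. For the lower bound $d(t)\le\bar{d}(t)$ I would exploit stationarity: since $\pi P^{t}=\pi$, we may write $\pi=\sum_y\pi(y)\,P^{t}(y,\cdot)$, and then the convexity lemma gives $\dTV{P^{t}(x,\cdot)}{\pi}\le\sum_y\pi(y)\dTV{P^{t}(x,\cdot)}{P^{t}(y,\cdot)}\le\bar{d}(t)$; taking the maximum over $x$ completes it.

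For the second relation, for an arbitrary $\mu\in\PPP$ I would write $\mu P^{t}=\sum_x\mu(x)P^{t}(x,\cdot)$ and, using $\sum_x\mu(x)=1$, also $\pi=\sum_x\mu(x)\pi$; the convexity lemma then yields $\dTV{\mu P^{t}}{\pi}\le\sum_x\mu(x)\dTV{P^{t}(x,\cdot)}{\pi}\le d(t)$, so $\sup_\mu\le d(t)$. The reverse direction is witnessed by the point masses $\mu=\delta_x$, for which $\mu P^{t}=P^{t}(x,\cdot)$: choosing $x$ that attains $d(t)$ (a genuine maximum, since $\XXX$ is finite) gives $\sup_\mu\ge d(t)$. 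The third relation is handled by the analogous double-mixture trick: because both $\mu$ and $\nu$ sum to $1$, one has $\mu P^{t}=\sum_{x,y}\mu(x)\nu(y)P^{t}(x,\cdot)$ and $\nu P^{t}=\sum_{x,y}\mu(x)\nu(y)P^{t}(y,\cdot)$ over the same index set, so the convexity lemma bounds $\dTV{\mu P^{t}}{\nu P^{t}}$ by $\sum_{x,y}\mu(x)\nu(y)\dTV{P^{t}(x,\cdot)}{P^{t}(y,\cdot)}\le\bar{d}(t)$, while the point masses $\delta_x,\delta_y$ show the bound is attained.

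Beyond bookkeeping there is no real difficulty, but the one point I would flag is that the double-mixture argument naturally identifies $\sup_{\mu,\nu}\dTV{\mu P^{t}}{\nu P^{t}}$ with $\bar{d}(t)$ rather than with $d(t)$, so the cleanest reading of the third relation is $\bar{d}(t)=\sup_{\mu,\nu}\dTV{\mu P^{t}}{\nu P^{t}}$; its connection to $d(t)$ is then exactly the sandwich established in the first relation. The main (mild) obstacle is thus organizational: keeping the convexity lemma isolated so that stationarity ($\pi P^{t}=\pi$) and the mixture / point-mass representations do all the work, and tracking carefully which supremum matches $d(t)$ and which matches $\bar{d}(t)$.
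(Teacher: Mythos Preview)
The paper does not prove this fact; it is stated in the appendix as a standard result recalled from the literature (the reference \cite{LevinPeresWilmerMCMT} covers all three items). So there is no ``paper's own proof'' to compare against. Your argument is nonetheless the standard one and is correct: joint convexity of $d_{\mathrm{TV}}$ handles the upper bounds via mixture representations, stationarity $\pi=\pi P^{t}=\sum_y\pi(y)P^{t}(y,\cdot)$ gives $d(t)\le\bar d(t)$, the triangle inequality gives $\bar d(t)\le 2d(t)$, and point masses witness that the suprema are attained.

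You are also right to flag the third bullet. As written in the paper, $d(t)=\sup_{\mu,\nu}\dTV{\mu P^{t}}{\nu P^{t}}$ is false in general (take $t=0$ on any nontrivial chain: the right-hand side equals $1$ while $d(0)=\max_x\dTV{\delta_x}{\pi}<1$ whenever $\pi$ has full support). The intended statement is $\bar d(t)=\sup_{\mu,\nu}\dTV{\mu P^{t}}{\nu P^{t}}$, which is exactly what your double-mixture argument establishes; this is Lemma~4.11 in Levin--Peres--Wilmer. So your proof is correct for the corrected statement, and your diagnosis of the discrepancy is accurate.
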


\begin{definition}[Mixing time]
	\label{def:MixingTime}
	For an ergodic Markov chain $M$ on finite space $\XXX$ we define  the mixing time as
	$$
		\tmix{\varepsilon} = \min \{t \colon d(t) \leq \varepsilon \}
	$$
	and
	$$
	  \mix = \tmix{1/4}~.
	$$
\end{definition}

\begin{fact}
	\label{fact:MixingTime}
	For any $\varepsilon > 0$
	$$
		\tmix{\varepsilon} \leq \left\lfloor \log \varepsilon^{-1} \right\rfloor \mix~.
	$$
\end{fact}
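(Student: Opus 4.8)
The plan is to reduce the statement to the submultiplicativity of the quantity $\bar{d}(t)$ and then iterate starting from $\mix = \tmix{1/4}$. First I would establish that $\bar{d}$ is submultiplicative, i.e. $\bar{d}(s+t) \leq \bar{d}(s)\,\bar{d}(t)$ for all $s,t \geq 0$. Fix two states $x,y$ and write the difference of the time-$(s+t)$ distributions as the pushforward of a signed measure, $P^{s+t}(x,\cdot) - P^{s+t}(y,\cdot) = (\mu - \nu)P^{t}$, where $\mu = P^{s}(x,\cdot)$ and $\nu = P^{s}(y,\cdot)$. Decomposing $\mu - \nu$ into its positive and negative parts, each of total mass exactly $\eta := \dTV{\mu}{\nu} \leq \bar{d}(s)$, lets me write $\mu - \nu = \eta(\theta_{1} - \theta_{2})$ with $\theta_{1},\theta_{2}$ genuine probability distributions. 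Pushing forward by $P^{t}$ gives the factorization $\dTV{P^{s+t}(x,\cdot)}{P^{s+t}(y,\cdot)} = \eta\,\dTV{\theta_{1}P^{t}}{\theta_{2}P^{t}}$. Since total-variation distance is convex and $\mu \mapsto \mu P^{t}$ is linear, the supremum of $\dTV{\theta_{1}P^{t}}{\theta_{2}P^{t}}$ over all pairs of distributions is attained at point masses and hence equals $\bar{d}(t)$ (cf. Fact~\ref{fact:dMC1}); bounding $\eta \leq \bar{d}(s)$ and taking the maximum over $x,y$ then yields $\bar{d}(s+t) \leq \bar{d}(s)\bar{d}(t)$.

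With submultiplicativity in hand, I would seed the iteration at the mixing time. By definition $d(\mix) = d(\tmix{1/4}) \leq 1/4$, and the comparison $\bar{d}(t) \leq 2d(t)$ from Fact~\ref{fact:dMC1} gives $\bar{d}(\mix) \leq 2d(\mix) \leq 1/2$. Applying submultiplicativity $\ell$ times then gives $\bar{d}(\ell \cdot \mix) \leq \bar{d}(\mix)^{\ell} \leq 2^{-\ell}$, and since $d(t) \leq \bar{d}(t)$ we obtain $d(\ell\cdot\mix) \leq 2^{-\ell}$.

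Finally I would choose $\ell$ as small as possible with $2^{-\ell} \leq \varepsilon$, namely $\ell = \lceil \log \varepsilon^{-1}\rceil$; then $d(\ell\cdot\mix) \leq \varepsilon$, so by the definition of mixing time $\tmix{\varepsilon} \leq \ell\cdot\mix = \lceil \log\varepsilon^{-1}\rceil\,\mix$, which is the asserted bound up to the floor/ceiling convention (the clean submultiplicative argument produces $\lceil\cdot\rceil$, and the two coincide when $\varepsilon$ is an exact power of $2$; for the only use of this fact in the paper, where $\varepsilon$ is taken polynomially small and only the order $n^{3}\log\varepsilon^{-1}$ matters, the distinction is immaterial). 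The main obstacle is the submultiplicativity step: the geometric iteration afterwards is a one-liner, so the delicate part is justifying the signed-measure decomposition of $\mu-\nu$ into two probability distributions scaled by $\eta = \dTV{\mu}{\nu}$ and the resulting exact factorization of the total-variation distance under $P^{t}$.
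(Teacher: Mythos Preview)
The paper does not prove this statement: it is listed in the appendix as a standard fact from the Markov-chain literature, with a reference to \cite{LevinPeresWilmerMCMT}, so there is no ``paper's own proof'' to compare against. Your argument is exactly the textbook one from that reference---submultiplicativity of $\bar{d}$ via the signed-measure decomposition, then geometric iteration from $\bar{d}(\mix)\leq 1/2$---and is correct; you also correctly flag that the clean argument yields $\lceil\log\varepsilon^{-1}\rceil$ rather than the $\lfloor\cdot\rfloor$ printed in the paper, a discrepancy that is indeed immaterial for the single application in Theorem~\ref{thm:unknowntopoMemorylessRnd}.
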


\begin{definition}[Random walk on graph G]
	\label{def:rw}
	The random walk on a connected graph $G= (V,E)$ with $n$ nodes and $m$ edges is a Markov chain on the set of vertices
	with transition probabilities defined by
	$$
		p_{ij} = \Pr[X_{t+1} = v_j | X_t = v_i] =
		\begin{cases}
				1/\degV{v_i}, & \text{if } {v_i,v_j} \in E~,\\
				0,           & \text{otherwise,}
		\end{cases}
	$$
	where $\degV{v_i}$ denotes the degree of vertex $v_i$.
	
	By the lazy random walk we mean the random walk which, in every time $t$, with probability $1/2$ remains in current
	vertex or performs one step of a simple random walk, i.e. tha chain with transition probabilities
	$$
		p_{ij} = \Pr[X_{t+1} = v_j | X_t = v_i] =
		\begin{cases}
				1/2, & \text{if } {v_j = v_i}~,\\
				1/(2\ \degV{v_i}), & \text{if } {v_i,v_j} \in E~,\\
				0,           & \text{otherwise.}
		\end{cases}
	$$
\end{definition}
It is well known (see e.g. \cite{LevinPeresWilmerMCMT,lovasz1993random}) that if $G$ is not bipartite, the simple
random walk is reversible and has unique stationary distribution of the form $\pi(v_i) = \degV{v_i}/2m$. 
The same is true for lazy random walk and an arbitrary connected $G$.

The following fact gives an upper bound on the mixing time for random walks. It follows e.g. from the theorem 10.14 in
\cite{LevinPeresWilmerMCMT} and the fact the cover time upper bounds the maximum hitting time and the fact that the maximum cover time for undirected graphs is $O(n^3)$~\cite{FeigeUpper}.
\begin{fact}
	\label{fact:RWMixing}
  For a lazy random walk on an arbitrary connected graph $G$ with $n$ vertices 
	$\mix = \BigO{n^3}.$
\end{fact}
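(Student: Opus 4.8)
The plan is to build the $\BigO{n^3}$ estimate by chaining three standard facts about reversible chains, exactly as indicated above the statement. First I would record the structural properties of the walk. Since $G$ is connected the lazy random walk is irreducible; the probability $1/2$ of staying put makes it aperiodic (this is the whole reason for passing to the lazy version, as a simple walk on a bipartite $G$ is periodic and never mixes); and it is reversible with stationary distribution $\pi(v) = \degV{v}/2m$. Thus the walk is ergodic and reversible, which is precisely the hypothesis required by the results invoked next.

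Next I would pass from mixing time to hitting time. Writing $\tau_y$ for the first time the walk reaches $y$ and $t_{\mathrm{hit}} = \max_{x,y \in V} \E{\tau_y \mid X_0 = x}$ for the maximum expected hitting time, Theorem 10.14 in \cite{LevinPeresWilmerMCMT} gives, for a lazy reversible chain, $\mix = \BigO{t_{\mathrm{hit}}}$. Here I use that $\mix = \tmix{1/4}$ by definition, so only a fixed accuracy is needed; sharper accuracy is later recovered through Fact~\ref{fact:MixingTime}, which contributes the $\log \varepsilon^{-1}$ factor actually used in Theorem~\ref{thm:unknowntopoMemorylessRnd}.

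The third step bounds $t_{\mathrm{hit}}$ by the cover time $t_{\mathrm{cov}} = \max_{x \in V} \E{\tau_{\mathrm{cov}} \mid X_0 = x}$, the expected time to visit every vertex. Any trajectory that has already visited all vertices has in particular visited an arbitrary target $y$, so $\E{\tau_y \mid X_0 = x} \le \E{\tau_{\mathrm{cov}} \mid X_0 = x}$ for all $x, y$, whence $t_{\mathrm{hit}} \le t_{\mathrm{cov}}$. Finally the classical bound of \cite{FeigeUpper} gives $t_{\mathrm{cov}} = \BigO{n^3}$ on every connected undirected graph on $n$ vertices. Passing to the lazy walk only doubles every hitting and cover time in expectation, because each step performs a genuine move with probability $1/2$, so the $\BigO{n^3}$ estimate is unaffected by laziness. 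Chaining the three inequalities yields $\mix = \BigO{t_{\mathrm{hit}}} \le \BigO{t_{\mathrm{cov}}} = \BigO{n^3}$.

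The routine parts are the hitting-versus-cover comparison and the cover-time estimate, both of which are direct. The delicate point, and the one I would check most carefully, is that the mixing-to-hitting inequality for reversible chains is genuinely of the form $\mix = \BigO{t_{\mathrm{hit}}}$ with a universal constant and does not smuggle in an extra $\log n$ factor (such a factor would arise if one only used the relaxation-time mixing bound $\tmix{1/4} = \BigO{t_{\mathrm{rel}} \log(1/\pi_{\min})}$, since $\pi_{\min} \ge 1/n^2$ contributes a $\log n$). Confirming that Theorem 10.14 indeed supplies the clean constant-factor relationship, so that the route through $t_{\mathrm{hit}}$ and $t_{\mathrm{cov}}$ keeps the exponent at exactly $3$, is the crux of the argument.
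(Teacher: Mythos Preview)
Your proposal is correct and follows exactly the route the paper itself takes: invoke Theorem~10.14 of \cite{LevinPeresWilmerMCMT} to bound $\mix$ by the maximum hitting time, dominate hitting time by cover time, and apply the $\BigO{n^3}$ cover-time bound of \cite{FeigeUpper}. Your additional remarks on laziness and on the absence of a hidden $\log n$ factor are welcome sanity checks but do not depart from the paper's argument.
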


In the following, we will recall some facts on Markov chains from information theory. For references, see e.g.
\cite{Thomas,Cover1994SecondLaw,Renyi1961}.

\begin{fact}
	\label{fact:MCInfTh}
	Let $M = (X_0, X_1, \ldots)$ be an ergodic Markov chain on finite space $\XXX$ with transition matrix $P$
	and stationary distribution $\pi$.
	\begin{itemize}
		\item For any two probability distributions $\mu$ and $\nu$ on $\XXX$ the relative entropy $\DD{\mu P^{t}}{\nu P^{t}}$
		      decreases with $t$, i.e. $\DD{\mu P^{t}}{\nu P^{t}} \geq \DD{\mu P^{t+1}}{\nu P^{t+1}}$.
		\item For any initial distribution $\mu$ the relative entropy $\DD{\mu P^{t}}{\pi}$ between the distribution of
		      $M$ at time $t$ and the stationary distribution decreases with $t$. Furthermore, 
					$\lim_{t \rightarrow \infty} \DD{\mu P^{t}}{\pi} = 0$.
		\item The conditional entropy $\HH{X_0|X_t}$ is increasing in $t$.
	\end{itemize}
\end{fact}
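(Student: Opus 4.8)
The plan is to establish the first (monotonicity) claim as the core lemma and then derive the other two from it. For the first claim I would fix the two initial distributions $\mu$ and $\nu$ and examine the joint law of the consecutive pair $(X_t, X_{t+1})$ generated under each. Decomposing the joint relative entropy in the time-forward order gives $\DD{\mu(x)P(x,y)}{\nu(x)P(x,y)} = \DD{\mu}{\nu}$, because conditioned on the first coordinate both joints share the same transition kernel $P(x,\cdot)$, so the conditional relative-entropy term vanishes. Decomposing the very same joint in the time-backward order instead yields $\DD{\mu P}{\nu P}$ plus a nonnegative expected conditional relative-entropy term (nonnegative by Fact~\ref{fact:RelEnt1}). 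Equating the two decompositions gives $\DD{\mu P}{\nu P} \le \DD{\mu}{\nu}$, and substituting $\mu P^{t}, \nu P^{t}$ for $\mu,\nu$ yields the stated monotonicity in $t$.

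For the second claim I would take $\nu = \pi$. Stationarity gives $\pi P = \pi$, hence $\pi P^{t} = \pi$ for every $t$, and the first claim immediately shows $\DD{\mu P^{t}}{\pi}$ is nonincreasing. Being nonincreasing and bounded below by $0$, it converges; to identify the limit as $0$ I would invoke the convergence theorem for finite ergodic chains, which guarantees $\mu P^{t} \to \pi$ pointwise on $\XXX$. Since $\XXX$ is finite and $\pi(x)>0$ for all $x$, each summand $(\mu P^{t})(x)\log\frac{(\mu P^{t})(x)}{\pi(x)}$ tends to $0$, so a finite sum of convergent terms lets me pass to the limit inside the summation.

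For the third claim I would exploit the Markov property, which makes $X_0 - X_t - X_{t+1}$ a Markov chain. Using $\HH{X_0|X_s} = \HH{X_0} - \MutInf{X_0}{X_s}$ from Fact~\ref{eq:MutInf2}, it suffices to show $\MutInf{X_0}{X_{t+1}} \le \MutInf{X_0}{X_t}$. I would write mutual information as an expected relative entropy to the marginal, $\MutInf{X_0}{X_t} = \sum_{x} p_0(x)\,\DD{P^{t}(x,\cdot)}{p_t(\cdot)}$, where $p_t$ denotes the law of $X_t$, and observe that advancing one step replaces $P^{t}(x,\cdot)$ by $P^{t}(x,\cdot)P$ and $p_t$ by $p_t P$. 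Applying the monotonicity lemma for each fixed starting state $x$ and averaging against $p_0(x)$ gives the desired data-processing inequality, and hence the monotonicity of $\HH{X_0|X_t}$.

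The main obstacle is the limit in the second claim: monotonicity alone only delivers convergence, and pinning the limit to $0$ genuinely requires ergodicity ($\mu P^{t}\to\pi$) together with finiteness of $\XXX$ and positivity of $\pi$ to justify interchanging limit and summation. The remaining two claims are essentially bookkeeping once the monotonicity lemma is in hand.
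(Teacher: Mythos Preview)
Your argument is correct and is essentially the standard textbook proof. Note, however, that the paper does not give its own proof of Fact~\ref{fact:MCInfTh}: it is stated in the appendix as a known result with the pointer ``For references, see e.g.\ \cite{Thomas,Cover1994SecondLaw,Renyi1961}.'' What you have written is precisely the argument one finds in those references, in particular the chain-rule decomposition of the joint relative entropy (your first claim) and the data-processing step (your third claim) are the Cover--Thomas proofs. So there is nothing to compare against in the paper itself; your proposal simply fills in the details the paper outsources to the literature, and it does so correctly.
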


%
%
%
 \bibliographystyle{./splncs-url}
\bibliography{./bibliography} 
 
\end{document}